\newtheorem{theorem}{Theorem}
\newtheorem{lemma}[theorem]{Lemma}
\newtheorem{claim}[theorem]{Claim}
\newtheorem{corollary}[theorem]{Corollary}
\newtheorem{definition}[theorem]{Definition}
\newtheorem{fact}[theorem]{Fact}
\newcommand{\ignore}[1]{}
\newcommand{\ess}{\ensuremath{\mathcal{S}}}
\newcommand{\cee}{\ensuremath{\mathcal{C}}}
\newcommand{\fC}{\ensuremath{\mathcal{K}}}
\newcommand{\calR}{\ensuremath{\mathcal{R}}}
\newcommand{\R}{\ensuremath{\mathcal{R}}}
\newcommand{\ST}{\ensuremath{\mathtt{ST}_{G,R}}}
\newcommand{\opt}{\ensuremath{\mathtt{opt}}}
\newcommand{\mst}{\ensuremath{\mathtt{mst}}}
\newcommand{\smst}{\ensuremath{\overline{\mathtt{mst}}}}
\newcommand{\MST}{\ensuremath{\mathtt{MST}}}
\newcommand{\NP}{\ensuremath{\text{NP}}}
\newcommand{\IR}{\mathbb{R}}
\newcommand{\rc}{\ensuremath{\mathtt{rc}}}
\newcommand{\rz}{\ensuremath{\mathtt{RZ}}}
\newcommand{\sppart}[1][]{\ensuremath{\Pi^{#1}}}
\newcommand{\LOSS}{\ensuremath{\mathtt{L}}}
\newcommand{\loss}{\ensuremath{\mathtt{l}}}
\newcommand{\rs}{\ensuremath{\bar{r}}}
\newcommand{\bs}{\ensuremath{\backslash}}
\newcommand{\newLP}{\ensuremath{\mathrm{(P}_{ST}^\emptyset\mathrm{)}}}
\newcommand{\newLPsp}{\ensuremath{\mathrm{(P}_{ST}^\emptyset\mathrm{)\,}}}
\newcommand{\foob}{\ensuremath{\mathrm{(P}_{ST}^{\ess'}\mathrm{)}}}
\newcommand{\doob}{\ensuremath{\mathrm{(D}_{ST}^{\ess^*}\mathrm{)}}}
\newcommand{\doobsp}{\ensuremath{\mathrm{(D}_{ST}^{\ess^*}\mathrm{)\,}}}
\newcommand{\poob}{\ensuremath{\mathrm{(P}_{ST}^{\ess^*}\mathrm{)}}}
\newcommand{\poobsp}{\ensuremath{(\mathrm{P}_{ST}^{\ess^*}\mathrm{)\,}}}
\newcommand{\REF}[1]{\mathrm{(}#1\mathrm{)}}
\newcommand{\K}{\ensuremath{\fC_r}}
\newcommand{\IS}[1]{#1} 
\newcommand{\ISC}[1]{#1} 
\newcommand{\steiner}[2]{
  \fnode[framesize=0.3](#1){#2}
}
\newcommand{\terminal}[2]{
  \cnode[fillstyle=solid,fillcolor=black](#1){.18}{#2}
}
\newcounter{ntheorem}
\newcounter{nntheorem}
\newcounter{gtheorem}
\newcounter{ggtheorem}
\begin{document}

\title{A Partition-Based Relaxation For Steiner Trees}

\author{
  Jochen K\"onemann\thanks{{
    Department of Combinatorics and Optimization, University of Waterloo,
    200 University Avenue West, Waterloo, ON N2L 3G1, Canada.
    Email: \{jochen,dagpritc,ktan\}@math.uwaterloo.ca}
}
  \and
  David Pritchard$^*$
  \and
  Kunlun Tan$^*$
}

\date{\today}

\maketitle

\begin{abstract}
 The Steiner tree problem is a classical \NP-hard
optimization problem with a wide range of practical applications. In
an instance of this problem, we are given an undirected graph
$G=(V,E)$, a set of {\em terminals} $R\subseteq V$, and non-negative
costs $c_e$ for all edges $e \in E$. Any tree that contains all
terminals is called a \emph{Steiner tree}; the goal is to find a
minimum-cost Steiner tree. The nodes $V \bs R$ are called
\emph{Steiner nodes}.

The best approximation algorithm known for the Steiner tree problem
is due to Robins and Zelikovsky (SIAM J. Discrete Math, 2005); their
\emph{greedy} algorithm achieves a performance guarantee of
$1+\frac{\ln 3}{2} \approx 1.55$.  The best known \emph{linear
programming} (LP)-based algorithm, on the other hand, is due to Goemans
and Bertsimas (Math. Programming, 1993) and achieves an
approximation ratio of $2-2/|R|$. In this paper we establish a link
between greedy and LP-based approaches by showing that Robins and
Zelikovsky's algorithm has a natural primal-dual interpretation with
respect to a novel \emph{partition}-based linear programming
relaxation.  We also exhibit surprising connections between the new
formulation and existing LPs and we show that the new LP is stronger
than the bidirected cut formulation.

An instance is \emph{$b$-quasi-bipartite} if each connected component
of $G \bs R$ has at most $b$ vertices. We show that Robins' and
Zelikovsky's algorithm has an approximation ratio better than
$1+\frac{\ln 3}{2}$ for such instances, and we prove that the
integrality gap of our LP is between $\frac{8}{7}$ and
$\frac{2b+1}{b+1}$.
\end{abstract}

\section{Introduction}\label{sec:intro}

\ignore{In an instance of the {\em Steiner tree} problem, we are given an
undirected graph $G=(V,E)$, a set $R \subseteq V$ of {\em terminals}
and a non-negative cost $c_e$ for each edge $e \in E$. The
vertices in the set $V\setminus R$ are referred to as {\em Steiner
vertices}. The goal is to find a minimum-cost tree $T$ that contains
all terminals.}

The Steiner tree problem is a classical problem in combinatorial
optimization which owes its practical importance to a host of
applications in areas as diverse as VLSI design and computational
biology. The problem is \NP-hard~\cite{Ka72}, and Chleb{\'i}k and
Chleb{\'i}kov{\'a} show in \cite{CC02} that it is \NP-hard even to
\emph{approximate} the minimum-cost Steiner tree within any ratio
better than $\frac{96}{95}$.  They also show that it is \NP-hard to
obtain an approximation ratio better than $\frac{128}{127}$ in {\em
quasi-bipartite} instances of the Steiner tree problem. These are
instances in which no two Steiner vertices are adjacent in the
underlying graph $G$.

\subsection{Greedy algorithms and $r$-Steiner trees}\label{sec:rSteiner}

\ignore{
Probably the first known approximation algorithm for the Steiner
tree problem is the {\em minimum-spanning tree heuristic} which is
generally attributed to Moore. Moore's algorithm appears in the
literature as early as 1968~\cite{GP68}, and it is known to have a
performance guarantee of $2$. This remained practically the best
known result for more than 20 years until, in 1990, Zelikovsky
suggested computing Steiner trees with a special structure, so
called {\em $r$-Steiner trees}.}

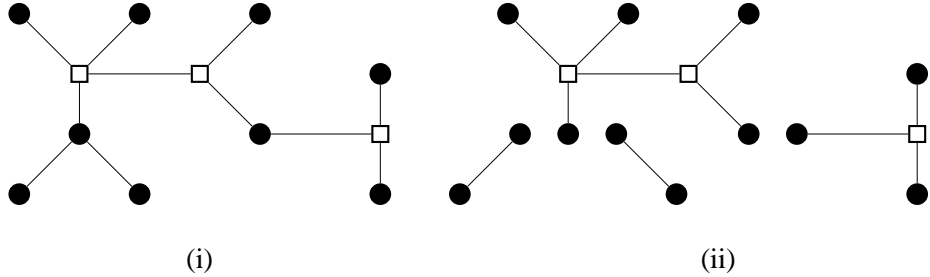
\begin{figure}
\begin{center} \leavevmode
\begin{pspicture}(0.8,0)(12.8,3.2)

\psset{unit=.8}

\terminal{1,4}{t1} \terminal{1,1}{t2} \terminal{3,1}{t3}
\terminal{3,4}{t4} \terminal{5,4}{t5} \terminal{5,2}{t6}
\terminal{7,3}{t7} \terminal{7,1}{t8} \terminal{2,2}{t9}

\steiner{2,3}{s1}  \steiner{4,3}{s3} \steiner{7,2}{s4}

\ncline[linewidth=.1mm]{t1}{s1} \ncline[linewidth=.1mm]{t2}{t9}
\ncline[linewidth=.1mm]{t3}{t9} \ncline[linewidth=.1mm]{t4}{s1}
\ncline[linewidth=.1mm]{t5}{s3} \ncline[linewidth=.1mm]{t6}{s3}
\ncline[linewidth=.1mm]{t7}{s4} \ncline[linewidth=.1mm]{t8}{s4}
\ncline[linewidth=.1mm]{t6}{s4} \ncline[linewidth=.1mm]{s1}{t9}
\ncline[linewidth=.1mm]{s1}{s3}

\uput{.5}[270](4,0.6){(i)}


\terminal{9,4}{at1} \terminal{8.2,1}{at2} \terminal{11.8,1}{at3}
\terminal{11,4}{at4} \terminal{13,4}{at5} \terminal{13,2}{at6}
\terminal{13.8,2}{bt6} \terminal{15.8,3}{bt7} \terminal{15.8,1}{bt8}
\terminal{9.2,2}{at9} \terminal{10.8,2}{bt9} \terminal{10,2}{ct9}

\steiner{10,3}{as1}  \steiner{12,3}{as3} \steiner{15.8,2}{bs4}

\ncline[linewidth=.1mm]{at1}{as1} \ncline[linewidth=.1mm]{at3}{bt9}
\ncline[linewidth=.1mm]{at2}{at9} \ncline[linewidth=.1mm]{at4}{as1}
\ncline[linewidth=.1mm]{at5}{as3} \ncline[linewidth=.1mm]{at6}{as3}

\ncline[linewidth=.1mm]{bt7}{bs4} \ncline[linewidth=.1mm]{bt8}{bs4}
\ncline[linewidth=.1mm]{bt6}{bs4} \ncline[linewidth=.1mm]{as1}{ct9}
\ncline[linewidth=.1mm]{as1}{as3}

\uput{.5}[270](12.5,0.6){(ii)}

\end{pspicture}
\caption{\label{fig:decomp} The figure shows a Steiner tree in (i)
and its decomposition into full components in (ii). Square and round
nodes correspond to Steiner and terminal vertices, respectively.
This particular tree is 5-restricted.}
\end{center}
\end{figure}

One of the first approximation algorithms for the Steiner tree
problem is the well-known {\em minimum-spanning tree heuristic}
which is widely attributed to Moore~\cite{GP68}. Moore's algorithm
has a performance ratio of $2$ for the Steiner tree problem and this
remained the best known until the 1990s, when Zelikovsky~\cite{Ze93}
suggested computing Steiner trees with a special structure, so
called {\em $r$-Steiner trees}. Nearly all of the Steiner tree
algorithms developed since then use $r$-Steiner trees. We now
provide a formal definition.

A {\em full Steiner component} (or {\em full component} for short)
is a tree whose internal vertices are Steiner vertices, and whose
leaves are terminals. The edge set of any Steiner tree can be
partitioned into full components, by {\it splitting} the tree at
terminals: see Figure \ref{fig:decomp} for an example. An
\emph{$r$-(restricted)-Steiner tree} is defined to be a Steiner tree
all of whose full components have at most $r$ terminals. We remark
that such a Steiner tree may in general not exist; for example, if
$G$ is a star with a Steiner vertex at its center and more than $r$
terminals at its tips. To avoid this problem, each Steiner vertex
$v$ is \emph{cloned} sufficiently many times: introduce copies of
$v$ and connect these copies to all of $v$'s neighbors in the graph.
Copies of an edge have the same cost as the corresponding original
edge in $G$.

Let $\opt$ and $\opt_r$ be the cost of an optimum Steiner tree and
that of an optimal $r$-Steiner tree, respectively, for the given
instance.  Define the {\em $r$-Steiner ratio} $\rho_r$ as the supremum
of $\opt_r/\opt$ over all instances of the Steiner tree problem.  In
\cite{BD97}, Borchers and Du provided an exact characterization of
$\rho_r$. The authors showed that $\rho_r = 1 + \Theta(1/\log r)$ and
hence that $\rho_r$ tends to 1 as $r$ goes to infinity.

Computing minimum-cost $r$-Steiner trees is \NP-hard for $r \geq
4$~\cite{GJ77}, even if the underlying graph is quasi-bipartite.  The
complexity status for $r=3$ is unresolved, and the case $r=2$ reduces
to the minimum-cost spanning tree problem.

In \cite{Ze93}, Zelikovsky used $3$-restricted full components to
obtain an $11/6$-approximation for the Steiner tree problem.
Subsequently, a series of papers (e.g., \cite{BR94,HP99,KZ97,PS00})
improved upon this result. These efforts culminated in a recent
paper by Robins and Zelikovsky~\cite{RZ05} in which the authors
presented a $\left(1+\frac{\ln 3}{2}\right) \approx
1.55$-approximation (subsequently referred to as \rz) for the
$r$-Steiner tree problem. They hence obtain, for each fixed $r \geq
2$, a $1.55\rho_r$ approximation algorithm for the (unrestricted)
Steiner tree problem. We refer the reader to two surveys in
\cite{GH+01b,PS02}.

\subsection{Approaches based on linear programs}

There is a large body of work on linear programming (LP)-based
approximation algorithms for problems in combinatorial optimization.
First, one finds a {\it good} LP relaxation for the problem. Then
one designs an algorithm that produces a feasible integral solution
whose cost is provably close to that of an optimum fractional
solution for this relaxation. Many aspects of different LP
relaxations for the Steiner tree problem have been investigated
(e.g., \cite{An80,CR94a,CR94b,DB+01,E67,GM93,PVd01,War98,Wo84}).

Many of these LPs have been fruitfully used in \emph{integer
programming}-based approaches to exactly solve instances of up to
ten thousand nodes \cite{PVd01a}. Another common area in which LPs
are useful is the design of polynomial time approximation algorithms
via the \emph{primal-dual method} (e.g., \cite{GW97}). In this
method, a feasible solution of the relaxation's LP dual is used to
obtain a lower bound on the optimum cost.

The ``classical" LP-based approximation algorithms for Steiner
trees~\cite{GB93} and forests~\cite{AKR} use the \emph{undirected
cut relaxation} \cite{An80} and have a performance guarantee of $2 -
\frac{2}{|R|}$. This relaxation has an integrality gap of $2 -
\frac{2}{|R|}$ and the analysis of these algorithms is therefore
tight. Slightly improved algorithms have since been designed
 \cite{K+05, PVd00} but do not achieve any constant approximation
factor better than 2.

In the special case of quasi-bipartite graphs, Rajagopalan and
Vazirani~\cite{RV99} and Rizzi~\cite{Ri03} obtained a $\frac{3}{2}$
approximation for the Steiner tree problem in quasi-bipartite
graphs. The analysis of \cite{RV99} applies the primal-dual method
to the {\em bidirected cut relaxation}~\cite{E67,Wo84}. The
bidirected cut relaxation is widely conjectured to have a worst-case
integrality gap that is close to $1$: the worst known example shows
a gap of only $\frac{8}{7}$ (see Section \ref{sec:gap}). Despite its
conjectured strength, this new relaxation has not yet given rise to
a Steiner tree algorithm with performance guarantee better than $2$
in general graphs.

\subsection{Contribution of this paper}\label{sec:contribution}
In this paper we provide algorithmic evidence that the primal-dual
method is useful for the Steiner tree problem. We first present a
novel LP relaxation for the Steiner tree problem. It uses full
components to strengthen a formulation based on {\em Steiner
partition} inequalities~\cite{CR94a}. We then show that the
algorithm \rz\ of Robins and Zelikovsky can be analyzed as a
primal-dual algorithm using this relaxation. 
We can show (see Section \ref{sec:gap}) that our relaxation is
strictly stronger than the \emph{standard} Steiner partition
formulation; so the use of full components strengthens the partition
inequalities.

In \cite{RZ05}, Robins and Zelikovsky showed that \rz\ has a
performance ratio of $1.279$ for quasi-bipartite graphs, and a
performance ratio of $1.55$ in general graphs. We prove a natural
interpolation of these two results. For a Steiner vertex $v$, define
its {\em Steiner neighborhood} $S_v$ to be the collection of
vertices that are in the same connected component as $v$ in $G \bs
R$. \ignore{
 set of Steiner vertices that are
reachable from $v$ in $G$ by paths containing only Steiner vertices,
i.e.,
$$ S_v = \{ u \in V\setminus R \,:\, \mbox{ there is a } u,v\mbox{-path } P \mbox{ in } G
                  \mbox{ s.t. } V(P) \cap R = \emptyset \}. $$
} A graph is {\em $b$-quasi-bipartite} if all of its Steiner
neighborhoods have cardinality at most $b$. Note,
``1-quasi-bipartite'' is synonymous with ``quasi-bipartite.'' We prove:

\setcounter{nntheorem}{\value{theorem}}
\begin{theorem}\label{thm:main}
Given an undirected, $b$-quasi-bipartite graph $G=(V,E)$, terminals
$R \subseteq V$, and a fixed constant $r \geq 2$, Algorithm \rz\
returns a feasible Steiner tree $T$ s.t.
$$
  c(T) \leq \left\{\begin{array}{l@{\quad : \quad}l}
      1.279 \cdot \opt_r & b=1 \\
      (1+\frac{1}{e}) \cdot \opt_r & b\in\{2, 3, 4\} \\
      \left(1+\frac{1}{2}\ln\left(3-\frac2b\right)\right)\opt_r & b \geq 5.
      \end{array}\right.
$$
\end{theorem}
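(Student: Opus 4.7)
The proof proceeds by sharpening the primal-dual analysis of \rz\ developed in the body of this paper. That analysis controls $c(\rz)/\opt_r$ through a structural inequality relating the full components of the optimal $r$-Steiner tree $T^*$ to their $\loss$; the general bound $1+\frac{\ln 3}{2}$ arises from a constant $\alpha=3$ in this inequality, and my task is to quantify how $\alpha$ can be improved under the hypothesis that every Steiner neighborhood has at most $b$ vertices.

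The principal step is to prove that, in any $b$-quasi-bipartite instance, every full component $K$ of $T^*$ admits an improved cost-to-loss relation with constant $3-2/b$ in place of $3$. The key structural observations are that $K$ has at most $b$ internal Steiner vertices (they all lie in a single Steiner neighborhood of $G \bs R$) and that a Steiner-minimal full component with $s$ Steiner vertices has at least $s+2$ terminal leaves. I would root $K$ at one of its terminal leaves and argue inductively on $s$: at each level the ``surplus'' of terminals over Steiner vertices lets one charge non-loss edges against descending loss-edges with a multiplier that improves by a factor of $1-1/b$ over the worst case. Substituting the resulting $3-2/b$ into the primal-dual integral of the analysis produces exactly $1+\frac{1}{2}\ln\bigl(3-\tfrac{2}{b}\bigr)$, settling the regime $b\ge 5$.

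For $b=1$ every full component is a spider with a single Steiner center, so the instance is quasi-bipartite in the classical sense; the claimed $1.279$ bound is precisely Robins and Zelikovsky's quasi-bipartite analysis from \cite{RZ05}, which can be imported through the primal-dual framework more or less verbatim. The real difficulty is the intermediate range $b\in\{2,3,4\}$, where the generic formula above does not by itself reach $1+1/e$: at $b=1$ the ratio $3-2/b$ collapses to $1$, which is already false structurally, and the inductive accounting above only stabilises once $b$ is large enough that the surplus-of-terminals argument is not dominated by boundary effects. I would therefore close the intermediate cases via a dedicated case analysis over the (constantly many) shapes a Steiner-minimal full component on at most four internal Steiner vertices can take, combined with a refined estimate for the primal-dual potential integral tuned so that a single bound $1+1/e$ holds across $b\in\{2,3,4\}$. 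Matching this threshold uniformly in the three small cases---in particular identifying why $1+1/e$ is the natural cutoff where the sharp formula gives out---is where I expect the bulk of the technical work to lie.
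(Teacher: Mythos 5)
Your high-level skeleton is right --- take the primal-dual bound of Lemma~\ref{lem:tpcost} as given, sharpen the bound on $\mst(G[R],c)$ in terms of $\opt_r$, $\loss(T^*)$ and $b$, then optimize the resulting logarithm over $\loss(T^*)/\opt_r \in [0,1/2]$ --- but both the structural argument you propose and your picture of the small-$b$ regime diverge from what actually works. The key structural bound, $\mst(G[R],c) \leq 2\opt_r - \frac{2}{b}\loss(T^*)$ (Lemma~\ref{lem:bquasi}), is proved per full component $K$ of $T^*$ by a short \emph{diameter} argument, not by the rooted induction with a terminal-surplus charging scheme you sketch. A depth-first traversal of $K$ between its two farthest-apart terminals, after short-cutting, gives a spanning tree of $G[K]$ of cost at most $2c_K - \Delta(K)$; and each of the at most $b$ Steiner vertices of $K$ lies within $\Delta(K)/2$ of some terminal (otherwise the diameter would be larger), so $\loss(K) \leq b\,\Delta(K)/2$, i.e.\ $\Delta(K) \geq \frac{2}{b}\loss(K)$, and summing over $K$ finishes. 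Your proposed ``$(1-1/b)$-improving multiplier'' induction is neither developed nor obviously headed to the right constant (note $3(1-1/b)=3-3/b$, not $3-2/b$), and the fact you cite --- a full component with $s$ Steiner vertices has at least $s+2$ terminal leaves --- is true and does appear in the paper, but only in the integrality-gap analysis (Lemma~\ref{lem:1st:rankdrop}), not in the proof of this theorem.

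Your diagnosis of the cases $b \in \{1,2,3,4\}$ is also misplaced. The generic inequality of Lemma~\ref{lem:bquasi} already delivers both $b=1$ and $b=2$: for $b=1$ the function $x\ln(1/x-1)$ on $[0,1/2]$ attains an \emph{interior} maximum $\approx 0.279$ near $x \approx 0.218$ --- this is exactly the Robins--Zelikovsky quasi-bipartite value, recovered from the same formula rather than imported as a separate argument --- and for $b=2$ the function $x\ln(1/x)$ peaks at $x=1/e$ with value $1/e$. Only $b \in \{3,4\}$ requires a sharpening, and it is a brief case split (Lemma~\ref{lem:bquasi2}): the at-most-four Steiner nodes of a full component form either a path or a $3$-tipped star, and in either case one can exhibit a loss candidate inside the diameter path, giving $\mst(G[R],c) \leq 2\opt_r - \loss(T^*)$, the same right-hand side as $b=2$. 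So the part you anticipated would be ``the bulk of the technical work'' is in fact the easiest step, and the claim that the generic formula is ``already false structurally'' at $b=1$ conflates the location of the calculus maximum (which moves into the interior of $[0,1/2]$ for $b \leq 4$) with the validity of the structural lemma (which holds for all $b \geq 1$).
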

Unfortunately, Theorem
\ref{thm:main} does not imply that our new relaxation has a small
integrality gap.  Nonetheless, we obtain the following bounds, when
$G$ is $b$-quasi-bipartite: \setcounter{ggtheorem}{\value{theorem}}
\begin{theorem}\label{thm:gap}
Our new relaxation has an integrality gap between $\frac{8}{7}$ and
$\frac{2b+1}{b+1}$.
\end{theorem}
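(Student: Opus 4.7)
My plan for Theorem~\ref{thm:gap} is to establish the two bounds independently: a lower bound of $\tfrac{8}{7}$ by exhibiting an explicit instance, and an upper bound of $\tfrac{2b+1}{b+1}$ by a primal-dual style analysis against the partition LP.

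For the lower bound, I would exhibit a small Steiner tree instance---quasi-bipartite, so that the resulting bound carries over to every $b\ge 1$---on which the partition LP's optimum value is at most $\tfrac{7}{8}\cdot\opt$. A natural starting point is the standard $\tfrac{8}{7}$ integrality-gap witness for the bidirected cut relaxation. Since the paper asserts that the new LP is strictly \emph{stronger} than bidirected cut, I cannot simply invoke that gap; instead I must explicitly construct a feasible full-component assignment $x=(x_K)$ of cost $\tfrac{7}{8}\cdot\opt$ and verify every partition inequality. Because the instance is small and quasi-bipartite, only finitely many Steiner partitions need to be inspected, so feasibility reduces to a case analysis and a direct cost computation.

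For the upper bound, the target ratio $\tfrac{2b+1}{b+1} = 2-\tfrac{1}{b+1}$ interpolates continuously between $\tfrac{3}{2}$ at $b=1$ (matching Rajagopalan--Vazirani / Rizzi for quasi-bipartite graphs) and $2$ as $b\to\infty$ (matching Agrawal--Klein--Ravi / Goemans--Bertsimas for general graphs), which strongly suggests a moat-growing primal-dual argument that gains a $\tfrac{1}{b+1}$ factor per Steiner neighborhood. My plan is to lean on the primal-dual interpretation of \rz\ with respect to the partition LP that has already been established earlier in the paper, and to bound, for each full component $K$ eventually purchased, the ratio between $c(K)$ and the total dual credited to $K$ by the partition inequalities it crosses. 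Using the structural fact that the internal Steiner vertices of any full component in a $b$-quasi-bipartite instance lie in a single Steiner neighborhood of size at most $b$, I would argue that the partition duals can cover $K$ at rate at least $\tfrac{b+1}{2b+1}$ of $c(K)$, and summing over the components of the final tree yields the desired integrality gap bound.

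The main obstacle is pinning down the exact $b$-dependence in the upper bound: getting $\tfrac{2b+1}{b+1}$ rather than the trivial $2$ forces the analysis to exploit the partition structure and not merely cuts, and the interaction between nested full components as the moats grow is the delicate point. The lower bound, by contrast, should reduce to a hand-checkable small example; its only subtlety is that stronger-than-bidirected-cut means one cannot quote the $\tfrac{8}{7}$ gap from the literature, but must exhibit the partition-LP solution explicitly.
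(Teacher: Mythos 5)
Your lower-bound plan is consistent with what the paper does: it exhibits a specific quasi-bipartite instance (a Fano-design construction attributed to Skutella) and directly verifies feasibility of a fractional full-component vector for every partition inequality, obtaining LP value $35/4$ against an integer optimum of $10$. The observation that one must \emph{explicitly} exhibit a feasible partition-LP solution rather than quote the bidirected-cut gap is correct and is exactly the issue the paper handles.

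Your upper-bound plan, however, rests on a conflation that the paper explicitly warns against. You propose to ``lean on the primal-dual interpretation of \rz'' and bound, for each full component the \emph{algorithm} buys, the dual credited to it. But the analysis of \rz\ in Section~\ref{sec:ana} bounds $c(T^p)$ against $\opt_r$, the \emph{integer} optimum, and yields a ratio ($\leq 1+\tfrac12\ln(3-2/b)$) that is actually \emph{smaller} than $\tfrac{2b+1}{b+1}$; the paper states in Section~\ref{sec:contribution} that this ``does not imply that our new relaxation has a small integrality gap.'' The dual the algorithm terminates with is a dual for $(\mathrm{D}_{ST}^{\ess^p})$, not for $\newLP$, and the ratio $c(T^p)/\smst^p$ is not controlled by the algorithm's approximation guarantee. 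The paper's actual argument never touches the \rz\ algorithm: it takes an \emph{optimal} tree $T^*$ with full-component set $\ess^*$, runs Kruskal on $\IS{\ess^*}$ to obtain a dual $y$ for \eqref{d-sp}, and observes that $y$ must already be feasible for \doob\ --- otherwise Lemma~\ref{lem:feas} would produce a cheaper spanning tree, contradicting optimality of $T^*$. The heart of the bound is then Lemma~\ref{lem:1st:rankdrop}: for every partition $\pi$ with $y_\pi>0$, $\rs(\pi)-1 \ge \tfrac{b+1}{2b+1}(r(\pi)-1)$, proved by contracting parts of $\pi$ to pseudonodes and counting edges (each Steiner pseudonode has degree $\ge 3$, and each full component of the contracted tree has at most $b$ Steiner pseudonodes). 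This per-\emph{partition} inequality, weighted by $y_\pi$, gives dual objective $\ge \tfrac{b+1}{2b+1} c(T^*)$, bounding the gap of \poob. Finally you need a second ingredient you do not mention: Lemma~\ref{lemma:gap++}, which shows that adding full components to $\ess$ only \emph{weakens} the LP, so the gap of $\newLP$ is at most the gap of \poob. Without that monotonicity step the bound for \poob\ says nothing about $\newLP$. In short: the ``charge per purchased full component'' framing is not how the paper proceeds and will not produce an integrality gap bound; you need the feasible-dual-from-optimum construction, the per-partition rank-drop lemma, and gap monotonicity.
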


\ignore{ The first part of the next section focuses on background
material related to spanning trees. We then introduce our new
LP-relaxation for the Steiner tree problem. In Section
\ref{sec:new-relax} we show that the new relaxation is equivalent to
a slightly strengthened version of the bidirected cut relaxation.
Section \ref{sec:zel} describes a primal-dual view of Robins and
Zelikovsky's greedy Steiner tree algorithm. Its analysis is given in
Sections \ref{sec:ana} and \ref{sec:tpcost}. Finally, in Section
\ref{sec:gap} we present a proof of Theorem \ref{thm:gap}.}

\section{Spanning trees and a new LP relaxation for Steiner trees}
\label{sec:mst}

Our work is strongly motivated by, and uses, results on the spanning
tree polyhedron due to Chopra~\cite{C89}. In this section, we first
discuss Chopra's characterization of the spanning tree polyhedron;
then we mention a primal-dual interpretation of Kruskal's spanning
tree algorithm~\cite{Kr56} based on Chopra's formulation. Finally we
extend ideas in \cite{CR94a,CR94b} to derive a new LP relaxation for
the Steiner tree problem.

\subsection{The spanning tree polyhedron}

To formulate the minimum-cost spanning tree (MST) problem as an LP,
we associate a variable $x_e$ with every edge $e \in E$. Each
spanning tree $T$ corresponds to its \emph{incidence vector} $x^T$,
which is defined by $x_e^T = 1$ if $T$ contains $e$ and $x_e^T = 0$
otherwise. Let $\Pi$ denote the set of all partitions of the vertex
set $V$, and suppose that $\pi \in \Pi$. The \emph{rank} $r(\pi)$ of
$\pi$ is the number of parts of $\pi$. Let $E_\pi$ denote the set of
edges whose ends lie in different parts of $\pi$. Consider the
following LP.

\begin{align}
\min \quad & \sum_{e \in E} c_e x_e \tag{$\mathrm{P}_{SP}$} \label{p-sp} \\
\mbox{s.t.} \quad &  \sum_{e \in E_{\pi}} x_e \geq r(\pi)-1 \quad
\forall \pi \in {\Pi},  \notag\\
 & x \geq 0. \notag
\end{align}

Chopra \cite{C89} showed that the feasible region of \eqref{p-sp} is
the convex hull of all incidence vectors of spanning trees, and hence
each basic optimal solution corresponds to a minimum-cost spanning
tree. Its dual LP is

\begin{align}
\max \quad & \sum_{\pi \in {\Pi}} (r(\pi)-1)\cdot y_{\pi} \tag{$\textrm{D}_{SP}$} \label{d-sp} \\
\mbox{s.t.} \quad  & \sum_{\pi: e \in E_{\pi}} y_{\pi} \leq c_e
\quad
        \forall e \in E, \label{d-sp:edge}\\
& y \geq 0. \label{d-sp-nn}
\end{align}

\subsection{A primal-dual interpretation of Kruskal's MST algorithm}
\label{sec:kruskal} Kruskal's algorithm can be viewed as a
continuous process over {\em time}: we start with an empty tree at
time $0$ and add edges as time increases. The algorithm terminates
at time $\tau^*$ with a spanning tree of the input graph $G$.  In
this section we show that Kruskal's method can be interpreted as a
primal-dual algorithm (see also \cite{GW97}).  At any time $0 \leq
\tau \leq \tau^*$ we keep a pair $(x_{\tau},y_{\tau})$, where
$x_{\tau}$ is a partial (possibly infeasible) 0-1 primal solution
for \eqref{p-sp} and $y_{\tau}$ is a feasible dual solution for
\eqref{d-sp}. Initially, we let $x_{e,0}=0$ for all $e \in E$ and
$y_{\pi,0}=0$ for all $\pi \in \Pi$.

Let $G_{\tau}$ denote the forest corresponding to partial solution
$x_{\tau}$ and let $E_\tau$ denote its edges, i.e., $E_{\tau}=\{e
\in E \mid x_{e,\tau}=1\}$. We then denote by $\pi_{\tau}$ the
partition induced by the connected components of $G_{\tau}$. At time
$\tau$, the algorithm then increases $y_{\pi_{\tau}}$ until a
constraint of type \eqref{d-sp:edge} for edge $e \in E\setminus
E_{\pi_\tau}$ becomes tight. Assume that this happens at time
$\tau'> \tau$. The dual update is
$$
y_{\pi_{\tau},\tau'} = \tau'-\tau.
$$
We then include $e$ in our solution, i.e., we set $x_{e,\tau'}=1$.
If more than one edge becomes tight at time $\tau'$, we can process
these events in any arbitrary order. Thus, note that we can pick any
such tight edge first in our solution. We terminate when $G_\tau$ is
a spanning tree. Chopra~\cite{C89} showed that the final primal and
dual solutions have the same objective value (and are hence
optimal), and we give a proof of this fact for completeness.

\begin{theorem}\label{thm:mst}
At time $\tau^*$, algorithm \MST\ finishes with a pair
$(x_{\tau^*},y_{\tau^*})$ of primal and dual feasible solutions to
\eqref{p-sp} and \eqref{d-sp}, respectively, such that
$$
\sum_{e \in E} c_e x_{e,\tau^*} = \sum_{\pi \in \Pi} (r(\pi)-1)
\cdot y_{\pi,\tau^*}.
$$
\end{theorem}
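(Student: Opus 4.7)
The plan is to verify three facts: (i) $x_{\tau^*}$ is the incidence vector of a spanning tree and is therefore feasible for \eqref{p-sp}; (ii) $y_{\tau^*}$ is feasible for \eqref{d-sp}; and (iii) the two objective values agree. Fact (i) is just the correctness of Kruskal's greedy procedure; feasibility for \eqref{p-sp} then follows because any spanning tree $T$ of $G$ satisfies $|T \cap E_\pi| \geq r(\pi)-1$ for every partition $\pi$, as contracting the parts of $\pi$ in $T$ yields a connected multigraph on $r(\pi)$ vertices that must contain at least $r(\pi)-1$ edges. Fact (ii) is immediate from the update rule: $y_{\pi_\tau}$ is raised only until the first constraint of type \eqref{d-sp:edge} becomes tight, so no edge constraint is ever violated.

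The substantive step is (iii), which I would prove by double counting. Let $T$ denote the spanning tree corresponding to $x_{\tau^*}$. For every $e \in T$, the dual constraint \eqref{d-sp:edge} for $e$ is tight at the instant $e$ is chosen, and its left-hand side does not change afterwards: once the two endpoints of $e$ are merged in $G_\tau$, they lie in the same part of every later partition $\pi_{\tau'}$, so subsequent dual updates contribute nothing to $\sum_{\pi : e \in E_\pi} y_\pi$. Swapping the order of summation,
\[
  \sum_{e \in E} c_e\, x_{e,\tau^*}
  \;=\; \sum_{e \in T} c_e
  \;=\; \sum_{e \in T} \sum_{\pi : e \in E_\pi} y_{\pi,\tau^*}
  \;=\; \sum_{\pi \in \Pi} y_{\pi,\tau^*} \cdot |T \cap E_\pi|.
\]
It remains to show $|T \cap E_\pi| = r(\pi)-1$ whenever $y_{\pi,\tau^*} > 0$. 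The only partitions that acquire positive dual value are the ``Kruskal'' partitions $\pi_\tau$ induced by the connected components of some intermediate forest $G_\tau$. Since $G_\tau \subseteq T$ and $T$ has $|V|-1$ edges, the set $T \setminus G_\tau$ has cardinality $|V|-1-|E_\tau| = r(\pi_\tau)-1$; each of its edges must join two distinct components of $G_\tau$ (otherwise $T$ would contain a cycle), and so lies in $E_{\pi_\tau}$. Hence $|T \cap E_{\pi_\tau}| = r(\pi_\tau)-1$, which turns the last expression above into the dual objective $\sum_{\pi} (r(\pi)-1)\, y_{\pi,\tau^*}$.

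The main obstacle is really the complementary slackness equality in the last step: primal feasibility forces only $|T \cap E_\pi| \geq r(\pi)-1$, and the upgrade to equality for the positive-dual partitions relies crucially on $G_\tau \subseteq T$, i.e., on the fact that the greedy choices made so far survive in the final tree. Everything else is bookkeeping about how time progresses in the algorithm, and weak LP duality then certifies that both $x_{\tau^*}$ and $y_{\tau^*}$ are optimal.
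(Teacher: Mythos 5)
Your proof is correct and follows essentially the same route as the paper's: expand the cost of the final tree via tight dual constraints, swap the order of summation to get $\sum_\pi |T \cap E_\pi|\, y_{\pi,\tau^*}$, and observe that $|T \cap E_\pi| = r(\pi)-1$ for every partition with positive dual value. You simply fill in the details (primal and dual feasibility, why the constraint for $e$ stays tight after $e$ is picked, and the cardinality count via $G_\tau \subseteq T$) that the paper states without elaboration.
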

\begin{proof}
Notice that for all edges $e \in E_{\tau^*}$ we must have
$c_e=\sum_{\pi: e \in E_{\pi}} y_{\pi,\tau^*}$ and hence, we can
express the cost of the final tree as follows:
$$
c(G_{\tau^*}) = \sum_{e \in E_{\tau^*}} \sum_{\pi: e \in E_{\pi}}
y_{\pi,\tau^*} = \sum_{\pi \in \Pi} \left|E_{\tau^*} \cap
E_{\pi}\right|\cdot y_{\pi,\tau^*}.
$$
By construction the set $E_{\tau^*} \cap E_{\pi}$ has cardinality
exactly $r(\pi)-1$ for all $\pi \in \Pi$ with $y_{\pi,\tau^*}>0$. We
obtain that $ \sum_{e \in E} c_e x_{e,\tau^*} = \sum_{\pi \in \Pi}
(r(\pi)-1) \cdot y_{\pi,\tau^*} $ and this finishes the proof of the
lemma. \qquad\end{proof}

Observe that the above primal-dual algorithm is indeed Kruskal's
algorithm: if the algorithm adds an edge $e$ at time $\tau$, then
$e$ is the minimum-cost edge connecting two connected components of
$G_{\tau}$.

\subsection{A new LP relaxation for Steiner trees}\label{sec:assumptions}
In an instance of the Steiner tree problem, a partition $\pi$ of $V$
is defined to be a {\em Steiner partition} when each part of $\pi$
contains at least one terminal. Chopra and Rao~\cite{CR94a}
introduced this notion and proved that, when $x$ is the incidence
vector of a Steiner tree and $\pi$ is a Steiner partition, the
inequality
\begin{equation}
\label{eq:cr94a} \sum_{e \in E_{\pi}} x_e \geq r(\pi)-1.
\end{equation}
holds. These \emph{Steiner partition inequalities} motivate our
approach.

In the following we use $G[U]$ to denote the subgraph of $G$ induced
by vertex set $U$, i.e., the graph with vertex set $U$ and such that
$E(G[U]) = \{uv \in E(G) \mid u \in U, v \in U\}$. We make the
following assumptions:
\begin{enumerate}
\item[A1.] $G[R]$ is a complete graph and, for any two terminals $u,v \in R$,
  $c_{uv}$ is the cost of a minimum-cost $u,v$-path in $G$.
\item[A2.] For every Steiner vertex $v$ and every vertex $u \in S_v \cup R$,
  $uv$ is an edge of $G$, and $c_{uv}$ is the cost of a minimum-cost $u,v$-path in $G$.
\end{enumerate}
It is a well-known fact that these assumptions are w.l.o.g., i.e.,
any given instance can be transformed into an equivalent instance
that satisfies these assumptions (e.g., see \cite{Va01}). Note that
$b$-quasi-bipartiteness is preserved by these assumptions.

Recall from Section \ref{sec:rSteiner} that a full component is a tree whose
internal vertices are Steiner vertices and all of whose leaves are terminals.
Also recall that a full component $K$ is $r$-restricted if it contains
at most $r$ terminals. Further, the edge-set of any $r$-restricted Steiner tree $T$ can be partitioned
into $r$-restricted full components.
From now on, let $r \geq 2$ be an arbitrary fixed constant.
Define
$$\fC_r := \{ K \subseteq R \,:\, 2 \leq |K| \leq r \textrm{ and there
exists a full component whose terminal set is }K\}.$$ We note that,
for each $K \in \fC_r$, we can determine a minimum-cost full
component with terminal set $K$ in polynomial time (e.g., by using
the dynamic programming algorithm of Dreyfus and
Wagner~\cite{DW72}). Thus, we can compute $\fC_r$ in polynomial time
as well.

For brevity we will abuse notation slightly and use $K \in \fC_r$
interchangeably for a subset of the terminal set and for a particular
min-cost full component spanning $K$. Given any $r$-restricted Steiner
tree, we may assume that all of its full components are from $\fC_r$,
without increasing its cost.

For each full component $K$, we use $E(K)$ to denote its edges,
$V(K)$ to denote its vertices (including Steiner vertices), and
$c_K$ to denote its cost. For a set $\ess$ of full components we
define $E(\ess) := \cup_{K \in \ess}E(K)$ and similarly $V(\ess) :=
\cup_{K \in \ess}V(K)$. By assumption A1 we may assume that the full
component for a terminal pair is just the edge linking those
terminals, and by assumption A2 we may assume that any Steiner node
has degree at least 3. We will also assume that any two distinct
full components $K_1, K_2 \in \fC_r$ are edge disjoint and
internally vertex disjoint. This assumption is without loss of
generality as each Steiner vertex in $G$ can be cloned a sufficient
number of times to ensure this property. Finally, we redefine $G$ to
be $(V(\K), E(\K))$; as a result, the Steiner trees of the new graph
correspond to the $r$-restricted Steiner trees of the original
graph.

Let $\fC_r(T)$ denote the set of all full components of a Steiner tree
$T$. For an arbitrary subfamily $\ess$ of the full components $\fC_r$,
our new LP uses the following canonical decomposition of a Steiner
tree into elements of $E(\ess)$ and $\fC_r \bs \ess$. The idea, as we
will explain later, is to iteratively select a ``good" set $\ess$.

\begin{definition} \label{def:sdecomp}
If $T$ is an $r$-restricted Steiner tree, its {\em
$\ess$-decomposition} is the pair
$$(E(T) \cap E(\ess), \K(T) \bs \ess).$$
\end{definition}

\ignore{
\begin{figure}
\begin{center} \leavevmode
\begin{pspicture}(0.4,0.4)(12.5,3.3)

\psset{unit=.8}

\psframe[linewidth=.1mm,framearc=.3,fillstyle=solid,fillcolor=bg](.5,.5)(15.5,3.5)
\pspolygon[linewidth=.1mm,linearc=.3,fillstyle=solid,fillcolor=bg-dark](.8,.5)(.8,3.5)(4,2)
\pscircle[fillstyle=solid,linewidth=.1mm,fillcolor=bg-dark](5.1,2){.8}
\pspolygon[linewidth=.1mm,linearc=.3,fillstyle=solid,fillcolor=bg-dark](6.8,.5)(6.8,3.5)(10,2)
\pspolygon[linewidth=.1mm,linearc=.3,fillstyle=solid,fillcolor=bg-dark](12.5,2)(15.4,3.5)(15.4,0.5)
\pscircle[fillstyle=solid,linewidth=.1mm,fillcolor=bg-dark](11.1,2){.8}

\terminal{1,3}{t1}
\terminal{1,1}{t2}
\steiner{3,2}{s1}
\terminal{5,2}{t3}
\steiner{7,2}{s2}
\terminal{7,1}{t8}
\terminal{7,3}{t9}
\terminal{9,2}{t4}
\terminal{11,2}{t5}
\steiner{13,2}{s3}
\terminal{15,3}{t6}
\terminal{15,1}{t7}

\steiner{7,4}{os1}

\ncline[linewidth=.2mm]{t1}{s1}
\ncline[linewidth=.2mm]{t2}{s1}
\ncline[linewidth=.2mm]{s2}{t4}
\ncline[linewidth=.2mm]{s2}{t8}
\ncline[linewidth=.2mm]{s2}{t9}
\ncline[linewidth=.2mm]{s3}{t6}
\ncline[linewidth=.2mm]{s3}{t7}
\ncline[linewidth=.2mm]{t5}{s3}
\ncline[linewidth=.2mm]{s1}{t3}

\ncline[linewidth=.2mm,linestyle=dashed]{t1}{os1}
\ncline[linewidth=.2mm,linestyle=dashed]{t9}{os1}
\ncline[linewidth=.2mm,linestyle=dashed]{os1}{t6}

\end{pspicture}
\caption{\label{fig:lp-idea} The figure shows the
$\ess$-decomposition of a Steiner tree $T$. Circles are terminals
and squares are Steiner nodes. Vertices and edges in the light-grey
outer rectangle belong to $\ess$ (note, full components from $\ess$
not appearing in $T$ are not shown). The dashed lines are the full
components $\fC_r(T) \bs {\ess}$ while the solid lines are $E(T)
\cap E(\ess)$. In this example, $|\K \bs \ess(T)|=2$.
}
\end{center}
\end{figure}
}

Observe that after $\ess$-decomposing a Steiner tree $T$ we have
$$ \sum_{e \in E(T) \cap E(\ess)} c_e + \sum_{K \in \K(T) \bs \ess} c_K = c(T).$$
We hence obtain a new higher-dimensional view of the Steiner tree
polyhedron. Define
\begin{eqnarray*}
 \ST^\ess  := \mathtt{conv} \{x \in \{0,1\}^{E(\ess)} \times \{0,1\}^{\fC_r\bs\ess} & \,:\, &
            \exists T \in \ST\mbox{ s.t. } x \mbox{ is the incidence} \\
       &  & \mbox{ vector of the $\ess$-decomposition of } T\}.
\end{eqnarray*}

\ignore{
We want to formulate an LP relaxation of the $r$-restricted Steiner
tree problem that makes use of this decomposition, and we do so
using partition inequalities. It is based on the following simpler
idea from \cite{CR94a,CR94b}, which is itself a generalization of
\cite{C89}.
}

The following definitions are used to generalize Steiner partition
inequalities to use full components. We use $\Pi^\ess$ to denote the
family of all partitions of $V(\ess) \cup R$.

\ignore{ Merging the two parts of $\pi$ containing the endpoints of
an edge $e \in E_{\pi}(\ess)$ yields a partition of rank
$r(\pi)-1=p-1$. In other words, adding edge $e$ to a Steiner tree
contributes $1$ to the rank requirement of $\pi$.

Now consider a full component $K \in \fC_r \bs {\ess}$. How much
does $K$ contribute to the rank requirement of $\pi$? In other
words, if we merged those parts of $\pi$ that contain terminals
spanned by $K$, what would the rank of the resulting partition be?
The following definition defines the rank drop in $\pi$ incurred by
adding $K$. In the rest of the paper, we define $\sppart[\ess]$ to
be the set of all partitions of the vertex set $R\cup V(\ess)$.
\ignore{for any $\ess \subseteq V\setminus R$. } }
\begin{definition}
Let $\pi=\{V_1,\ldots,V_p\} \in \sppart[\ess]$ be a partition of the
set $R\cup V(\ess)$. The {\em rank contribution} of full component
$K \in \fC_r \bs {\ess}$ is defined as
$$ \rc^{\pi}_K := |\{i \,:\, K\mbox{ contains a terminal in } V_i\}| - 1. $$
The {\em Steiner rank} $\rs(\pi)$ of $\pi$ is defined as
$$ \rs(\pi) := \{\textrm{the number of parts of $\pi$ that contain terminals}\}.$$
\end{definition}

We describe below a new LP relaxation \eqref{p-st} of $\ST^\ess$.
The relaxation has a variable $x_e$ for each $e \in E(\ess)$ and a
variable $x_K$ for each $K \in \fC_r \bs {\ess}$. For a partition
$\pi \in \Pi^\ess$, we define $E_\pi(\ess)$ to be the edges of
$\ess$ whose endpoints lie in different parts of $\pi$, i.e.,
$E_\pi(\ess) = E(\ess) \cap E_\pi$.

\begin{align}
\min \quad & \sum_{e \in E(\ess)} c_e \cdot x_e + \sum_{K \in \K \bs
\ess} c_K \cdot x_K \tag{$\mathrm{P}_{ST}^\ess$}
\label{p-st} \\
\mbox{s.t} \quad  & \sum_{e \in E_{\pi}(\ess)} x_e + \sum_{K \in
\fC_r \bs {\ess}} \rc^{\pi}_K\cdot x_K \geq \rs(\pi)-1 & &
\forall \pi \in \sppart[\ess]  \label{eq:p}\\
& x_e,x_K \geq 0 & & \forall e \in E(\ess), K \in \fC_r \bs {\ess}
\label{p-st-nn}
\end{align}

Its LP dual has a variable $y_{\pi}$ for each partition $\pi \in
\sppart[\ess]:$
\begin{align}
\max \quad & \sum_{\pi \in \sppart[\ess]} (\rs(\pi)-1)\cdot y_{\pi}
 \tag{$\textrm{D}_{ST}^\ess$} \label{d-st} \\
\mbox{s.t} \quad  & \sum_{\pi \in \sppart[\ess]: e \in
E_{\pi}(\ess)} y_{\pi} \leq c_e
&& \forall e \in E  \label{d-st:1} \\
& \sum_{\pi \in \sppart[\ess]} \rc^{\pi}_K \cdot y_{\pi}\leq c_K
&& \forall K \in \fC_r \bs {\ess}  \label{d-st:2}\\
& y_{\pi} \geq 0,
& & \forall \pi \in \sppart[\ess]
\label{d-st-nn}
\end{align}

We conclude this section with a proof that the (primal) LP is indeed
a relaxation of the convex hull of $\ess$-decompositions for
$r$-restricted Steiner trees. Obviously, constraints
\eqref{p-st-nn} 
hold whenever $x$ is the incidence vector of the
$\ess$-decomposition of a Steiner tree.

\begin{lemma}\label{lem:genpart}
The inequality \eqref{eq:p} is valid for $\ST^\ess$.
\end{lemma}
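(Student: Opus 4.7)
The plan is to fix an arbitrary $r$-restricted Steiner tree $T$, let $x$ be the incidence vector of its $\ess$-decomposition, and show directly that the inequality \eqref{eq:p} holds for $x$ at any $\pi \in \sppart[\ess]$. Since $\ST^\ess$ is a convex hull of such vectors, validity for the polytope will follow immediately.

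Fix $\pi = \{V_1, \ldots, V_p\} \in \sppart[\ess]$ and assume, WLOG, that $V_1, \ldots, V_{\rs(\pi)}$ are the parts containing at least one terminal. I will construct an auxiliary multigraph $H$ on vertex set $\{V_1, \ldots, V_p\}$ as follows. For every edge $e \in E(T)\cap E(\ess)$ whose endpoints lie in distinct parts of $\pi$, add the corresponding edge to $H$. For every full component $K \in \fC_r(T)\bs\ess$, let $P(K) := \{V_i \,:\, K \text{ contains a terminal in } V_i\}$ and add to $H$ the edge set of an arbitrary spanning tree on $P(K)$; this contributes exactly $|P(K)|-1 = \rc^\pi_K$ edges. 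By construction, the total number of edges of $H$ equals exactly the left-hand side of \eqref{eq:p} evaluated at $x$.

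The crux is then to show that $V_1, \ldots, V_{\rs(\pi)}$ all lie in a single connected component of $H$; once this is established, that component contains at least $\rs(\pi)$ vertices of $H$ and hence at least $\rs(\pi)-1$ edges, yielding $|E(H)| \geq \rs(\pi)-1$ as required. To prove connectivity, pick any two terminals $t, t'$ lying in distinct terminal-parts, and consider the unique $t,t'$-path $P$ in $T$. Because distinct full components in $\fC_r$ are internally vertex-disjoint, $P$ decomposes as a concatenation of sub-paths through full components $K_1, K_2, \ldots, K_m$ of $T$, meeting only at terminals $t=t_0, t_1, \ldots, t_m=t'$. For each sub-path through $K_j$ I trace a walk in $H$ between the parts containing $t_{j-1}$ and $t_j$: if $K_j \in \ess$, then $V(K_j) \subseteq V(\ess)\cup R$, every edge of $K_j$ is classified by $\pi$, and the edges of $K_j$ which cross parts form a walk in $H$ from $V_{i(t_{j-1})}$ to $V_{i(t_j)}$; if $K_j \notin \ess$, then $V_{i(t_{j-1})}$ and $V_{i(t_j)}$ both lie in $P(K_j)$ and are therefore connected in the spanning tree I added. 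Concatenating these walks connects $V_{i(t)}$ to $V_{i(t')}$ in $H$.

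The main (small) obstacle is the bookkeeping in the last step: one must be careful that the two types of contributions to $H$ — crossing edges from $E(T)\cap E(\ess)$ and synthetic spanning-tree edges from full components in $\fC_r(T)\bs\ess$ — together cover every piece of every $T$-path between terminals. The disjointness assumption on the full components in $\fC_r$, together with the decomposition $E(T)\cap E(\ess) = E(\fC_r(T)\cap\ess)$, ensures that each full component of $T$ is handled by exactly one of the two cases, so no piece is missed and no piece is double-counted. Everything else is routine.
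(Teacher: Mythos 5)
Your proof is correct, and it takes a genuinely different route from the paper's. The paper argues by contradiction with a minimal-rank counterexample partition: it first reduces to the case where $\pi$ is a Steiner partition, then handles the base case $\fC_r(T)\bs\ess=\emptyset$ by citing the Chopra--Rao Steiner partition inequality \eqref{eq:cr94a}, and otherwise picks a full component $\bar{K}\in\fC_r(T)\bs\ess$, merges the parts of $\pi$ it touches, and invokes the minimality of $\pi$ to close the induction. Your proof instead builds an auxiliary multigraph $H$ on the parts of $\pi$ whose edge count equals the left-hand side of \eqref{eq:p} exactly, and then exhibits, for every pair of terminals, an explicit $H$-walk between their parts by tracing the $T$-path and projecting each full-component sub-path to $H$ (via crossing edges when the component lies in $\ess$, and via the synthetic spanning tree on $P(K)$ otherwise). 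Since all $\rs(\pi)$ terminal-bearing parts lie in one component of $H$, that component has at least $\rs(\pi)-1$ edges, and you are done. This is a direct, self-contained argument: it does not invoke \eqref{eq:cr94a} as a black box (the $K_j\in\ess$ branch effectively re-derives the needed instance of it), and it avoids the minimal-counterexample induction. One small nitpick: the decomposition of a $T$-path into sub-paths through full components meeting only at terminals is a consequence of the definition of full components (splitting $T$ at terminals), not of the internal-vertex-disjointness assumption on $\fC_r$; the disjointness assumption is what you actually need for the identity $E(T)\cap E(\ess)=E(\fC_r(T)\cap\ess)$, which ensures each full component of $T$ falls into exactly one of your two cases. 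You flag this correctly in the last paragraph, so no harm done, but the earlier attribution is slightly misplaced.
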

\begin{proof}
Suppose, for the sake of contradiction, that \eqref{eq:p} is not
valid for $\ST^\ess$ for this $\pi$. Then there must exist a
feasible Steiner tree $T$ with $\ess$-decomposition $(E(T) \cap
E(\ess),\K(T) \bs \ess)$ whose incidence vector $x \in \ST^\ess$
violates \eqref{eq:p} for some partition $\pi \in \sppart[\ess]$.
Choose such a partition $\pi$ with smallest rank.

Observe first that $\pi$ must be a Steiner partition. Otherwise,
there is a part $V_1$ of $\pi$ that contains no terminals. Let $V_2$
be a part in $\pi$ that contains terminals and obtain a new
partition $\pi'$ from $\pi$ by merging $V_1$ and $V_2$.  As $V_1$
contains no terminals, we clearly have $\rc^{\pi}_K=\rc^{\pi'}_K$
for all full components $K \in \fC_r$. Also, the Steiner rank of
$\pi$ and $\pi'$ is the same. As $e \in E_{\pi'}(\ess)$ implies that
$e \in E_{\pi}(\ess)$, it follows that \eqref{eq:p} is violated for
$\pi'$ as well and $\pi'$ has smaller rank than $\pi$ which
contradicts our choice.

Suppose that $V(T) \subseteq R \cup V(\ess)$. This would mean that
$\fC_r(T) \bs {\ess} =\emptyset$ and in this case, Equation
\eqref{eq:cr94a} implies that
$$ \sum_{e \in E_{\pi}(\ess)} x_e \geq r(\pi)-1. $$
Thus, inequality \eqref{eq:p} holds for $\pi$ and $x$ which is a
contradiction.

We may therefore assume that $\fC_r(T) \bs {\ess}$ contains some
full component $\bar{K}$. We obtain a new partition $\pi'$ from
$\pi$ by merging those parts of $\pi$ that contain terminals spanned
by $\bar{K}$. The rank of this new partition is $r(\pi) -
\rc^{\pi}_{\bar{K}}$. It follows from our choice of $\pi$ that
$$ \sum_{e \in E_{\pi'}(\ess)} x_e + \sum_{K \in \fC_r \bs {\ess}} \rc^{\pi'}_K x_K \geq
r(\pi')-1 = r(\pi) - \rc^{\pi}_{\bar{K}} - 1. $$ Now note that
$E_{\pi'}(\ess) \subseteq E_{\pi}(\ess)$ and $\rc_{\bar{K}}^{\pi'} =
0$, and that $\rc^{\pi'}_K \leq \rc^{\pi}_K$ for all $K \in \fC_r
\bs \ess$. The above inequality therefore implies
$$ \sum_{e \in E_{\pi}(\ess)} x_e + \sum_{K \in \fC_r \bs {\ess}} \rc^{\pi}_K x_K \geq
\sum_{e \in E_{\pi'}(\ess)} x_e + \sum_{K \in \fC_r \bs {\ess} \bs
\{\bar{K}\}} \rc^{\pi'}_K x_K + \rc^{\pi}_{\bar{K}} \geq r(\pi) -
\rc^{\pi}_{\bar{K}} - 1 + \rc^{\pi}_{\bar{K}}
$$ which in turn proves that \eqref{eq:p} holds for $\pi$ and $x$.
This contradiction completes the proof of the lemma.
\end{proof}

\section{An iterated primal-dual algorithm for Steiner trees}\label{sec:zel}

\ignore{In this section we present a simple iterated primal-dual algorithm
that captures many of the greedy Steiner tree algorithms
that were introduced in Section \ref{sec:intro}.
In what follows, we fix an
instance of the Steiner tree problem, i.e., we fix the undirected
input-graph $G=(V,E)$, the non-negative cost function $c \geq 0$ and
some constant $r \geq 2$.}

As described in Section \ref{sec:kruskal}, $\MST(G,c)$ denotes a
call to Kruskal's minimum-spanning tree algorithm on graph $G$ with
cost-function $c$. It returns a minimum-cost spanning tree $T$ and
an optimal feasible dual solution $y$ for \eqref{d-sp}. Let $\mst(G,
c)$ denote the cost of $\MST(G, c)$. Since $c$ is fixed, in the rest
of the paper we omit $c$ where possible for brevity. Let us also
abuse notation and identify each set $\ess \subset \K$ of full
components with the graph $(V(\ess), E(\ess))$.

\ignore{Let $T^*$ be a Steiner tree of minimum cost for the given
instance and let $S^*$ be the set of Steiner vertices in $T^*$,
i.e., $S^*=V(T^*)\setminus R$. Then $T^*$ is a minimum-cost spanning
tree in the graph $G[R\cup S^*]$ that is induced by the vertices in
$R\cup S^*$.  Thus, identifying the set $S^*$ of Steiner vertices is
an \NP-hard problem by itself.}

The main idea of the greedy algorithms in \cite{RZ05,Ze96,Ze93} is
to find a set $\ess \subset \K$ of full components such that
$\MST(\ISC{\ess})$ has small cost relative to $\opt_r$. Let
$\tbinom{R}{2}$ denote the collection of all pairs of terminals. The
algorithms all start with $\ess = \tbinom{R}{2}$ and then grow
$\ess$, so for the rest of the paper we assume that
$\tbinom{R}{2}\subseteq \ess$; hence $E(G[R]) \subseteq E(\ess)$ and
$R \subseteq V(\ess)$.

The reason that $\MST$ is useful in our primal-dual framework is
that we can relate the dual program \eqref{d-sp} on graph $\ess$ to
the dual program \eqref{d-st}. Let $y$ be the feasible dual returned
by a call to $\MST(\ISC{\ess})$. We treat $y$ as a dual solution of
\eqref{d-st} by setting each $y_K$ to zero; note that constraints
\eqref{d-sp:edge} and \eqref{d-sp-nn} of \eqref{d-sp} imply that $y$
also meets constraints \eqref{d-st:1} and \eqref{d-st-nn} of
\eqref{d-st}.
If $K$ is a full component
such that \eqref{d-st:2} does not hold for $y$, we say that
$K$ is \emph{violated} by $y$.

The primal-dual algorithm finds such a set $\ess$ in an iterative
fashion. Initially, $\ess$ is equal to $\tbinom{R}{2}$. In each
iteration, we compute a minimum-cost spanning tree $T$ of the graph
$\ISC\ess$. The dual solution $y$ corresponding to this tree is
converted to a dual for \eqref{d-st}, and if $y$ is feasible for
\eqref{d-st}, we stop. Otherwise, we add a violated full component
to $\ess$ and continue. The algorithm clearly terminates (as $\K$ is
finite) and at termination, it returns the final tree $T$ as an
approximately-optimal Steiner tree.

Algorithm \ref{alg} summarizes the above description. The greedy
algorithms in \cite{RZ05,Ze96,Ze93} differ only in how $K$ is
selected in each iteration, i.e., in how the selection function
$f_i:\fC_r \rightarrow \IR$ is defined (see also \cite[\S
1.4]{GH+01b} for a well-written comparison of these algorithms).

\begin{algorithm}
\caption{\label{alg} A general iterative primal-dual framework for Steiner
  trees.}
\begin{algorithmic}[1]
\STATE Given: Undirected graph $G=(V,E)$, non-negative costs
       $c_e$ for all edges $e \in E$, constant $r\geq 2$.
\STATE $\ess^0:=\tbinom{R}{2}$, $i:=0$
\REPEAT
  \STATE $(T^i,y^i) := \MST(\ISC{\ess^i})$
  \IF{$y^{i}$ is not feasible for $\REF{\textrm{D}_{ST}^{\ess^{i}}}$}
  \STATE Choose a violated full component $K^i \in \fC_r \bs {\ess^i}$ such
       that $f_i(K^i)$ is minimized
\STATE $\ess^{i+1} := \ess^i \cup \{K^i\}$ \ENDIF \STATE $i:=i+1$
\UNTIL{$y^{i-1}$ is feasible for $\REF{\textrm{D}_{ST}^{\ess^{i-1}}}$}
\STATE Let $p=i-1$ and return $(T^p,y^p)$.
\end{algorithmic}
\end{algorithm}

The following lemma is at the heart of our proof, and explains why
our LP can be used to find cheap Steiner trees.

\begin{lemma}\label{lem:feas}
Let $(T, y) = \MST(\ISC{\ess})$ and suppose that $K$ is violated by
$y$. Then adding $K$ to $\ess$ produces a cheaper spanning
tree, i.e.,
$$ \mst(\ISC{\ess \cup \{K\}}) < c(T). $$
\end{lemma}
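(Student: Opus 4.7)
The strategy is to exhibit a spanning tree $T'$ of $\ISC{\ess \cup \{K\}}$ with $c(T') < c(T)$; since $\mst(\ISC{\ess \cup \{K\}}) \le c(T')$, this suffices. Let $k := |K|$ denote the number of terminals of $K$. Internal vertex-disjointness of full components, together with $R \subseteq V(\ess)$, forces $V(K) \cap V(\ess) = K$.

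To build $T'$, I would pick any set $F \subseteq E(T)$ of $k-1$ edges whose removal partitions $T$ into $k$ subtrees, one per terminal of $K$, and set $T' := (T \setminus F) \cup E(K)$. Such an $F$ exists: for example, iteratively delete an edge incident to a leaf of the minimal subtree of $T$ spanning $K$. An edge count gives $|E(T')| = |V(\ess \cup \{K\})| - 1$, and since $K$ is connected and contains every terminal of $K$, adding $E(K)$ reconnects the $k$ subtrees, so $T'$ is indeed a spanning tree of $\ISC{\ess \cup \{K\}}$, of cost $c(T') = c(T) - \sum_{e \in F} c_e + c_K$. It therefore suffices to prove $\sum_{e \in F} c_e > c_K$.

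This is where the dual $y$ enters. Theorem \ref{thm:mst} applied to $\MST(\ISC{\ess})$ yields $c_e = \sum_{\pi: e \in E_\pi(\ess)} y_\pi$ for every $e \in E(T)$, hence $\sum_{e \in F} c_e = \sum_\pi |F \cap E_\pi(\ess)| \cdot y_\pi$. The combinatorial heart of the proof is the claim that $|F \cap E_\pi(\ess)| \ge \rc^\pi_K$ whenever $y_\pi > 0$. Any such $\pi$ arises as the component partition $\pi_\tau$ of Kruskal's forest at some time $\tau$, so each part of $\pi$ induces a connected subtree of $T$. Contracting each part to a supernode gives a tree $T/\pi$; the $\rc^\pi_K + 1$ supernodes containing at least one terminal of $K$ must lie in $\rc^\pi_K + 1$ distinct components of $T/\pi$ after the edges $F \cap E_\pi(\ess)$ are removed, for otherwise two terminals of $K$ sitting in different parts of $\pi$ would remain connected in $T \setminus F$, contradicting the defining property of $F$. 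Separating that many supernodes in a tree requires at least $\rc^\pi_K$ edge deletions.

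Combining, $\sum_{e \in F} c_e \ge \sum_\pi \rc^\pi_K y_\pi > c_K$, the strict inequality being precisely the hypothesis that $K$ is violated. The principal obstacle is the combinatorial lower bound $|F \cap E_\pi(\ess)| \ge \rc^\pi_K$, and this step depends crucially on the connectedness of the parts of dual-active partitions, which is itself a feature of the Kruskal primal-dual structure from Section \ref{sec:kruskal}.
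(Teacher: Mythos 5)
Your overall strategy is the same as the paper's: exhibit a spanning tree of $\ISC{\ess\cup\{K\}}$ by deleting $k-1$ edges of $T$ and inserting $E(K)$, then bound the total cost of the deleted edges using the dual $y$. However, the combinatorial heart of your argument --- the claim that $|F \cap E_\pi(\ess)| \ge \rc^\pi_K$ for every dual-active $\pi$ --- is false for an arbitrary separating set $F$, and your reasoning for it has a gap that is fatal. The implication you assert, ``if two $K$-bearing supernodes land in the same component of $(T/\pi)\setminus(F\cap E_\pi)$, then two $K$-terminals remain connected in $T\setminus F$,'' does not follow: the inter-part edges on the $T$-path between those terminals may indeed avoid $F$, but the \emph{intra}-part edges on that path (which live inside single parts of $\pi$ and hence are invisible in $T/\pi$) can belong to $F$ and cut the path. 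Concretely, take $T$ with vertices $t_1,t_2,t_3,t_4,a,b$, edges $t_1a,t_2a,t_3b,t_4b,ab$ of costs $1,2,3,4,5$, and $K=\{t_1,t_2,t_3,t_4\}$. The partition $\pi=\{\{t_1,t_2,a\},\{t_3,t_4,b\}\}$ is the Kruskal partition on $[4,5)$, so $y_\pi=1>0$ and $\rc^\pi_K=1$. The set $F=\{t_1a,t_2a,t_3b\}$ has $k-1=3$ edges and separates all four terminals (components $\{t_1\},\{t_2\},\{t_3\},\{a,b,t_4\}$), yet $F\cap E_\pi=\emptyset$ because every edge of $F$ lies inside a single part. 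Moreover your proposed construction (iteratively remove an edge at a leaf of the minimal $K$-spanning subtree) produces exactly this bad $F$, and in this example it can never produce a good one. Here $\sum_{e\in F}c_e=6$ while $\sum_\pi\rc^\pi_K y_\pi=11$, so your concluding chain of inequalities fails.

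The paper sidesteps this by not taking an arbitrary $F$: it deletes the specific edges $e_1,\dotsc,e_q$ at which, during the run of Kruskal's algorithm, the rank-contribution $\rc^{\pi_\tau}_K$ actually drops (i.e., the edges whose addition merges two components each containing a $K$-terminal). With that choice one gets the \emph{identity} $\sum_i c_{e_i}=\int_0^{\tau^*}\rc^{\pi_\tau}_K\,d\tau=\sum_\pi\rc^\pi_K y_\pi$, and the dual violation immediately gives $\sum_i c_{e_i}>c_K$; one also checks directly that removing these edges separates the $k$ terminals. In other words, the lemma requires a carefully chosen $F$ (in the example above it is the unique maximum-cost valid $F$, namely $\{t_2a,t_4b,ab\}$), not an arbitrary one, and your argument would need to be rebuilt around that specific choice.
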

\begin{proof}Assume that $\MST(\ISC{\ess})$ finishes at time $\tau^*$ and,
once again, let $\pi_{\tau}$ be the partition maintained by
Kruskal's algorithm at time $0 \leq \tau \leq \tau^*$.

Define $q=\rc_K^{\pi_0}$ to be the rank-contribution of $K$ with
respect to the initial partition. Clearly, $\rc_K^{\pi_{\tau^*}}=0$
as all terminals are contained in the same connected component at
time $\tau^*$.  Then there are edges $e_1, \ldots, e_q \in T$ such
that, for $1 \leq i \leq q$, the rank-contribution of $K$ with
respect to the partition maintained by Kruskal's algorithm drops
from $q-i+1$ to $q-i$ when edge $e_i$ is added. Formally, for $1
\leq i \leq q$, let $\pi_i$ and $\pi_i'$ be the partition maintained
by Kruskal's algorithm before and after adding edge $e_i$, then
$$ \rc_K^{\pi_i} = \rc_K^{\pi_i'} + 1. $$
We denote the time of addition of edge $e_i$ by $\tau_i$ for all
$i$.

From the description of Kruskal's algorithm it follows that
$$ \sum_{i=1}^q c_{e_i} = \sum_{i=1}^q \tau_i = \int_{0}^{\tau^*}
\rc^{\pi_{\tau}}_K d\tau $$ and the right-hand side of this equality
is equal to $\sum_{\pi \in \sppart[\ess]} \rc^{\pi}_K y_{\pi}$. The
fact that constraint \eqref{d-st:2} is violated for $K$ therefore
implies that
$$ c_{e_1} + \dotsb + c_{e_q} > c_K.$$
Finally observe that $T \cup E(K) \setminus \{e_1, \ldots, e_q\}$ is
a spanning tree of $\IS{\ess \cup \{K\}}$ and its cost is smaller
than that of $T$.
\end{proof}

\subsection{Cutting losses: the {\tt RZ} selection function}

A potential weak point in Algorithm \ref{alg} is that once a full
component is added to $\ess$, it is never removed. On the other hand,
if some cheap subgraph $H$ connects all Steiner vertices of $\ess$ to
terminals, then adding $H$ to any Steiner tree gives us a tree that
spans $V(\ess)$, i.e., we have so far {\it lost} at most $c(H)$ in the
final answer. This leads to the concept of the {\em loss} of a Steiner
tree which was first introduced by Karpinski and Zelikovsky in
\cite{KZ97}.

\begin{definition}
Let $G'=(V',E')$ be a subgraph of $G$. The {\em loss} $\LOSS(G')$ is
a minimum-cost set $E'' \subseteq E'$ such that every connected
component of $(V',E'')$ contains a terminal. Let $\loss(G')$ denote
the cost of $\LOSS(G')$.
\end{definition}

See Figure \ref{fig:loss} for an example of the loss of a graph. The
above discussion amounts to saying that $\min\{\mst(\ISC{\ess'})
\mid \ess' \supseteq \ess\} \leq \opt_r + \loss(\ess)$.
Consequently, our selection function $f_i$ in step 6 of the
algorithm should try to keep the loss small. The following fact
holds because full components in $\K$ meet only at terminals.
\begin{fact}\label{fact:modloss}
If $\mathcal{S} \subseteq \K$, then $\LOSS(\IS{\mathcal{S}}) =
\cup_{K \in \mathcal{S}} \LOSS(K)$ and so $\loss(\IS{\mathcal{S}}) =
\sum_{K \in \mathcal{S}} \loss(K)$.
\end{fact}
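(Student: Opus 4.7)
The plan is to exploit the structural assumption from Section~\ref{sec:assumptions} that any two distinct full components in $\K$ are edge-disjoint and internally vertex-disjoint, so the only vertices they can share are terminals. This decoupling is exactly what should let me reduce the loss computation on $\IS{\mathcal{S}}$ to independent loss computations on the individual $K$'s.

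First I would dispatch the easy direction by showing that $\bigcup_{K \in \mathcal{S}} \LOSS(K)$ is a feasible loss-set for $\IS{\mathcal{S}}$, which gives $\loss(\IS{\mathcal{S}}) \leq \sum_{K \in \mathcal{S}} \loss(K)$. The check is essentially bookkeeping: every Steiner vertex $v \in V(\mathcal{S}) \setminus R$ is internal to a unique $K \in \mathcal{S}$ by internal vertex-disjointness, and $\LOSS(K)$ already joins $v$ to some terminal in $V(K)$; terminals themselves lie trivially in components containing a terminal. Edge-disjointness of the $E(K)$'s then supplies the matching cost identity $c\bigl(\bigcup_K \LOSS(K)\bigr) = \sum_K \loss(K)$.

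The main obstacle is the reverse inequality, and this is the step I would think about most carefully. My approach is to pick any optimal $E^* := \LOSS(\IS{\mathcal{S}})$ and restrict it component-wise by setting $E^*_K := E^* \cap E(K)$, and then argue that each $E^*_K$ is itself a valid loss for $K$ in isolation. The key observation is that internal vertex-disjointness forces every internal vertex of $K$ to have \emph{all} of its $\IS{\mathcal{S}}$-incident edges inside $E(K)$. Consequently, any path in $(V(\IS{\mathcal{S}}), E^*)$ from an internal Steiner vertex $v$ of $K$ to a terminal cannot leave $V(K)$ before it first hits a terminal, and that terminal is therefore already a terminal of $K$; so $v$ is joined to a terminal of $K$ using only edges of $E^*_K$, certifying feasibility. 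This yields $\loss(K) \leq c(E^*_K)$, and summing over $K$ together with the edge-disjoint decomposition $c(E^*) = \sum_K c(E^*_K)$ produces $\sum_K \loss(K) \leq \loss(\IS{\mathcal{S}})$.

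Combining the two inequalities pins down the cost equality $\loss(\IS{\mathcal{S}}) = \sum_K \loss(K)$, and the feasibility argument from the easy direction shows that $\bigcup_K \LOSS(K)$ attains the minimum, which is exactly the set identity $\LOSS(\IS{\mathcal{S}}) = \bigcup_{K \in \mathcal{S}} \LOSS(K)$ (up to the usual choice-of-optimum ambiguity in the notation $\LOSS(\cdot)$).
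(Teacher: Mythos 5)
Your proof is correct, and it is a careful elaboration of exactly the idea the paper invokes in a single line (``the following fact holds because full components in $\K$ meet only at terminals''), namely internal vertex-disjointness plus edge-disjointness of distinct full components. The paper leaves the verification implicit; your two-inequality argument (feasibility of $\bigcup_K \LOSS(K)$ for one direction, restriction of an optimal loss-set to each $E(K)$ together with the path-confinement observation for the other) fills in precisely the routine bookkeeping that the one-line justification glosses over.
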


For a set $\ess$ of full components, where $y$ is the dual solution
returned by $\MST(\ISC{\ess})$, define
\begin{equation}\label{eq:smst}
 \smst(\ISC{\ess}) := \sum_{\pi \in \Pi^\ess} (\rs(\pi)-1)y_{\pi}.
\end{equation}
If $y$ is feasible for \eqref{d-st} then by weak LP duality,
$\smst(\ISC{\ess})$ provides a lower bound on $\opt_r$. If $y$ is
infeasible for \eqref{d-st}, then which full component should we
add? Robins and Zelikovsky propose minimizing the ratio of the
change in upper bound to the change in potential lower bound
\eqref{eq:smst}. Their selection function $f_i$ is defined by
\begin{equation}\label{def:fi}
  f_i(K) := \frac{\loss(K)}{\smst(\ISC{\ess^i})
           -\smst(\ISC{\ess^i \cup \{K\}})} =   \frac{\loss(\ess^i \cup \{K\}) - \loss(\ess^i)}{\smst(\ISC{\ess^i})
           -\smst(\ISC{\ess^i \cup \{K\}})},
\end{equation}
where the equality uses Fact \ref{fact:modloss}.

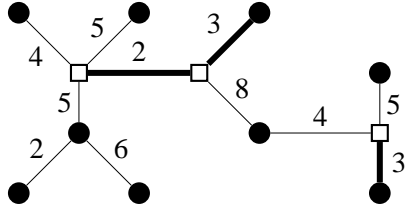
\begin{figure}
\begin{center} \leavevmode
\begin{pspicture}(0.8,0.8)(5.6,3.2)

\psset{unit=.8}

\terminal{1,4}{t1}
\terminal{1,1}{t2}
\terminal{3,1}{t3}
\terminal{3,4}{t4}
\terminal{5,4}{t5}
\terminal{5,2}{t6}
\terminal{7,3}{t7}
\terminal{7,1}{t8}

\steiner{2,3}{s1} \terminal{2,2}{s2} \steiner{4,3}{s3}
\steiner{7,2}{s4}

\ncline[linewidth=.1mm]{t1}{s1} \nbput[labelsep=.1]{$4$}
\ncline[linewidth=.1mm]{t2}{s2} \naput[labelsep=.1]{$2$}
\ncline[linewidth=.1mm]{t3}{s2} \nbput[labelsep=.1]{$6$}
\ncline[linewidth=.1mm]{t4}{s1} \nbput[labelsep=.1]{$5$}
\ncline[linewidth=.8mm]{t5}{s3} \nbput[labelsep=.1]{$3$}
\ncline[linewidth=.1mm]{t6}{s3} \nbput[labelsep=.1]{$8$}
\ncline[linewidth=.1mm]{t7}{s4} \naput[labelsep=.1]{$5$}
\ncline[linewidth=.8mm]{t8}{s4} \nbput[labelsep=.1]{$3$}
\ncline[linewidth=.1mm]{t6}{s4} \naput[labelsep=.1]{$4$}
\ncline[linewidth=.1mm]{s1}{s2} \nbput[labelsep=.1]{$5$}
\ncline[linewidth=.8mm]{s1}{s3} \naput[labelsep=.1]{$2$}

\end{pspicture}
\caption{\label{fig:loss} The figure shows the Steiner tree instance
from Figure \ref{fig:decomp} with costs on the edges. The loss of
the Steiner tree in this figure is shown in thick edges. Its cost is
$8$.}
\end{center}
\end{figure}

\section{Analysis}\label{sec:ana}
Fix an optimum $r$-Steiner tree $T^*$. There are several steps in
proving the performance guarantee of Robins and Zelikovsky's
algorithm, and they are encapsulated in the following result, whose
complete proof appears in Section \ref{sec:tpcost}.


\begin{lemma}\label{lem:tpcost}
The cost of the tree $T^p$ returned by Algorithm \ref{alg} is at
most
$$ \opt_r + \loss(T^*) \cdot \ln\left(1+\frac{\smst(G[R],c)
  - \opt_r}{\loss(T^*)}\right). $$
\end{lemma}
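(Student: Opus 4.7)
The plan is to show $c(T^p) - \opt_r \leq \sum_i \loss(K^i) = \loss(\ess^p)$ and then bound the total accumulated loss logarithmically via the greedy selection rule, in the spirit of the classical greedy set-cover analysis. Define $\ell_i := \loss(K^i)$, $\Delta_i := \smst(\ess^i) - \smst(\ess^{i+1})$, and $\phi_i := \smst(\ess^i) - \opt_r$, so $\phi_0 = \smst(G[R],c) - \opt_r$; by Fact~\ref{fact:modloss} we have $\loss(\ess^p) = \sum_{i=0}^{p-1}\ell_i$.

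For the reduction step I would first use that loop exit means $y^p$ is feasible for \eqref{d-st}, giving $\smst(\ess^p) \leq \opt_r$ by weak LP duality. I would then establish an auxiliary identity $\mst(\ess) \leq \smst(\ess) + \loss(\ess)$ for $\ess \supseteq \tbinom{R}{2}$: expanding $\mst(\ess) - \smst(\ess) = \sum_\pi (r(\pi)-\rs(\pi))\,y_\pi$ using the Kruskal dual, the coefficient $r(\pi)-\rs(\pi)$ counts the ``Steiner-only'' parts of $\pi$, and each such part encountered during Kruskal's continuous-time execution on $\ess$ can be charged (at the relevant instant) to an edge of the min-cost attachment set realizing $\LOSS(\ess)$. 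Combining the two facts gives $c(T^p) = \mst(\ess^p) \leq \opt_r + \sum_i \ell_i$.

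The core greedy step is to prove $\ell_i/\Delta_i \leq \loss(T^*)/\phi_i$ while $\phi_i > 0$. The greedy choice yields $f_i(K^i) \leq f_i(K)$ for every $K \in \fC_r(T^*) \setminus \ess^i$, so the mediant inequality (minimum of ratios is at most the ratio of sums) gives
\[
  \frac{\ell_i}{\Delta_i} \;\leq\; \frac{\sum_{K \in \fC_r(T^*) \setminus \ess^i} \loss(K)}{\sum_{K \in \fC_r(T^*) \setminus \ess^i}\bigl(\smst(\ess^i) - \smst(\ess^i \cup \{K\})\bigr)}.
\]
The numerator is at most $\loss(T^*)$ by Fact~\ref{fact:modloss}. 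For the denominator I would prove a supermodularity-type inequality: the sum of individual marginal $\smst$-decreases is at least the joint decrease $\smst(\ess^i) - \smst(\ess^i \cup \fC_r(T^*))$, which in turn is at least $\phi_i$ because $\smst(\ess^i \cup \fC_r(T^*)) \leq \mst(\ess^i \cup \fC_r(T^*)) \leq c(T^*) = \opt_r$ (the graph $\ess^i \cup \fC_r(T^*)$ contains $T^*$ as a spanning subtree of its support).

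Finally, to obtain the $\ln(1+\cdot)$ shape rather than a plain logarithm I would work with a shifted potential $\psi_i := \phi_i + \loss(T^*) - \loss(\ess^i \cap \fC_r(T^*))$, which satisfies $\psi_0 = \phi_0 + \loss(T^*)$ and $\psi_p \geq \loss(T^*)$ after tracking the final iteration. A sharpened greedy bound $\ell_i \leq \loss(T^*)\,(\psi_i-\psi_{i+1})/\psi_i$ (derived by the same mediant/supermodularity argument but with the shifted denominator) telescopes to
\[
  \sum_i \ell_i \;\leq\; \loss(T^*)\sum_i \frac{\psi_i-\psi_{i+1}}{\psi_i} \;\leq\; \loss(T^*)\ln\!\frac{\psi_0}{\psi_p} \;\leq\; \loss(T^*)\ln\!\left(1+\frac{\phi_0}{\loss(T^*)}\right),
\]
and combined with the earlier reduction this gives the lemma. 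The main obstacle I expect is orchestrating the shifted potential (or, equivalently, a continuous-time view in which loss is added at a matched rate to the decrease of $\phi$) so that the sharpened greedy inequality holds uniformly across iterations, including the final one where $\phi_i$ may become small or slightly negative; the supermodularity of $\smst$-marginal decreases and the $\mst \leq \smst + \loss$ identity are the two other technical ingredients, each requiring a short dedicated argument via Kruskal's continuous-time interpretation.
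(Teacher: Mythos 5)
Your plan follows the same skeleton as the paper's proof: the identity $\mst(\ess)=\smst(\ess)+\loss(\ess)$ (Lemma~\ref{lem:smst}), the mediant/averaging argument via greedy choice, and a supermodularity-type bound (the Contraction Lemma, Lemma~\ref{lem:subm}) to lower-bound the sum of marginal $\smst$ drops by $\smst^i-\smst^*$. Those ingredients are the right ones.

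The gap is in the finishing step. You reduce to showing $\sum_i\ell_i\le\loss(T^*)\ln\bigl(1+\phi_0/\loss(T^*)\bigr)$, and to get the $\ln(1+\cdot)$ shape you introduce the shifted potential $\psi_i=\phi_i+\loss(T^*)-\loss(\ess^i\cap\fC_r(T^*))$ and assert $\psi_p\ge\loss(T^*)$. But at termination $\smst^p\le\opt_r$, so $\phi_p\le 0$, and the subtracted loss term is nonnegative; hence $\psi_p\le\loss(T^*)$, with strict inequality in general. The sharpened per-step bound $\ell_i\le\loss(T^*)(\psi_i-\psi_{i+1})/\psi_i$ also does not fall out cleanly from the mediant argument with this $\psi$: when $K^i\in\fC_r(T^*)$ the difference $\psi_i-\psi_{i+1}$ picks up an extra $\ell_i$ term that tangles with the left-hand side. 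More fundamentally, by writing $c(T^p)-\opt_r\le\sum_i\ell_i$ you discard the slack $\opt_r-\smst^p$, and that slack is exactly what is needed: once $\smst^i-\smst^*$ drops below $\loss(T^*)$, the ratio bound from the greedy step is weaker than~$1$, so one must fall back on $f_i(K^i)<1$ (your Corollary~\ref{cor:fone}, i.e.\ the algorithm only adds improving components), and the loss accumulated in those late iterations is bounded by $\smst^{i_0}-\smst^p\le\opt_r-\smst^p$, which cancels against $\smst^p$ only when you keep the equality $c(T^p)=\smst^p+\loss(T^p)$ rather than the relaxed $c(T^p)-\opt_r\le\loss(T^p)$. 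The paper handles this by bounding $f_i(K^i)$ by $\loss^*/\max\{\loss^*,\,\smst^i-\smst^*\}$ and passing to an integral; in potential language, the correct (simpler) choice is $\psi_i:=\smst^i-\smst^*$, with the two-regime bound and the final cancellation of $\opt_r-\smst^p$ against $\smst^p$. Your extra $-\loss(\ess^i\cap\fC_r(T^*))$ correction is both unnecessary and the source of the incorrect endpoint claim.
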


The main observation in the proof of the above lemma
can be summarized as follows: from the
discussion in Section \ref{sec:mst}, we know that the tree $T^p$
returned by Algorithm \ref{alg} has cost
$$ \mst(\ISC{\ess^p}) = \sum_{\pi \in \sppart[\ess^p]} (r(\pi)-1)y^p_{\pi} $$
and the corresponding lower-bound on $\opt_r$ returned by the
algorithm is
$$ \smst(\ISC{\ess^p}) = \sum_{\pi \in \sppart[\ess^p]} (\rs(\pi)-1)y^p_{\pi}. $$
We know that $\smst(\ISC{\ess^p}) \leq \opt_r$ but how large is
the difference between $\mst(\ISC{\ess^p})$ and $\smst(\ISC{\ess^p})$? We show that the difference
$$ \sum_{\pi \in \sppart[\ess^p]} (r(\pi)-\rs(\pi)) y^p_{\pi} $$
is exactly equal to the loss $\loss(T^p)$ of tree $T^p$. We then
bound the loss of each selected full component $K^i$, and putting
everything together finally yields Lemma \ref{lem:tpcost}.

The following lemma states the performance guarantee of Moore's
minimum-spanning tree heuristic as a function of the optimum loss
and the maximum cardinality $b$ of any Steiner neighborhood in $G$.

\begin{lemma}\label{lem:bquasi}
Fix an arbitrary optimum $r$-restricted Steiner tree $T^*$. Given an
undirected, $b$-quasi-bipartite graph $G=(V,E)$, a set of terminals
$R \subseteq V$, and non-negative costs $c_e$ for all $e \in E$, we
have
$$ \mst(G[R],c) \leq 2\opt_r - \frac{2}{b}\loss(T^*) $$
for any $b\geq 1$.
\end{lemma}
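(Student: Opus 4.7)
The plan is to decompose the optimum tree $T^*$ into its full components and reduce to a per-full-component statement. Let $T^* = \bigcup_i K_i$ be this decomposition; by the splitting characterization, distinct $K_i$ share only at terminals. So if $T_i$ is any spanning tree of $R \cap V(K_i)$ in $G[R]$, the union $\bigcup_i T_i$ is a connected spanning subgraph of $G[R]$ (the $K_i$ form a tree under terminal-sharing, lifting to connectivity of the $T_i$), giving $\mst(G[R],c) \leq \sum_i \mst(R \cap V(K_i))$. Also, in each full component $K_i$ the internal vertices are Steiner vertices forming a connected subtree, so they lie in one Steiner neighborhood of $G$ and number at most $b$. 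Since $\opt_r = \sum_i c(K_i)$ and, by Fact~\ref{fact:modloss}, $\loss(T^*) = \sum_i \loss(K_i)$, the lemma follows by summing a per-component inequality $\mst(R \cap V(K)) \leq 2c(K) - \tfrac{2}{b}\loss(K)$ for each full component $K$ with at most $b$ Steiner vertices.

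For the per-component bound, I apply the classical doubling-and-shortcutting argument. Doubling every edge of $K$ yields an Eulerian multigraph of total weight $2c(K)$; an Euler tour visits every vertex. Shortcutting the tour to visit only terminals — using the triangle inequality for $c$ guaranteed by assumptions A1 and A2 — produces a Hamilton cycle $H$ on $R \cap V(K)$ of cost at most $2c(K)$. Deleting any one edge of $H$ yields a spanning tree of $R \cap V(K)$, so it suffices to exhibit an Euler tour and an edge of the resulting $H$ whose weight is at least $(2/b)\loss(K)$.

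To identify such an edge I use two structural facts about optimal loss forests. First, each connected component contains exactly one terminal: if a component contained two, an interior edge on a path between them could be removed while preserving feasibility, contradicting minimality. So the loss components partition $V(K)$, indexed by $R \cap V(K)$, with component $C_t$ containing terminal $t$ and at most $b$ Steiners and $\sum_t c(C_t) = \loss(K)$. A carefully chosen Euler tour that enters each loss component, traverses it along a depth-first path from its rep, and exits then forces some ``heavy'' loss component $C_{t^*}$ to contribute a Hamilton-cycle edge incident to $t^*$ whose shortest-$G$-path realization has cost at least $(2/b)\loss(K)$; intuitively $t^*$ is ``buried'' in loss weight and its nearest neighbors in $G$ are correspondingly far.

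I expect the main obstacle to be exactly this last step: formally producing a Hamilton-cycle edge of weight at least $(2/b)\loss(K)$. A naive averaging bound (max Hamilton-cycle edge $\geq 2c(K)/|R\cap V(K)|$) is insufficient since $\loss(K)/c(K)$ can exceed $b/|R\cap V(K)|$ in general; the proof must combine the one-terminal-per-loss-component structure with the $b$-quasi-bipartite bound on per-component Steiner count through a pigeonhole over loss components. Once established, summing the per-component inequality across the full components of $T^*$ yields $\mst(G[R],c) \leq 2\opt_r - (2/b)\loss(T^*)$.
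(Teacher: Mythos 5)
The overall plan --- decompose $T^*$ into full components, prove a per-component bound $\mst(G[R \cap V(K)]) \leq 2c_K - \frac{2}{b}\loss(K)$, and sum via Fact~\ref{fact:modloss} --- is exactly what the paper does. But your treatment of the per-component bound has a genuine gap, which you yourself flag as the main obstacle. You want to form a Hamilton cycle $H$ on the terminals of $K$ of cost at most $2c_K$ by doubling and shortcutting, and then delete one edge of $H$ of weight at least $\frac{2}{b}\loss(K)$. That requires such an edge to exist, which is not established and can fail: each edge of $H$ is a $G$-shortest-path distance between two terminals (assumptions A1, A2), and these can be far smaller than the corresponding tree distances in $K$ --- for instance when other full components provide shortcuts --- while $\loss(K)$ stays large. (The lemma still holds in that regime because $c(H)$ itself is then small, but your argument discards this by bounding $c(H)$ only by $2c_K$ before looking for a heavy edge.) The ``pigeonhole over loss components'' you gesture at is not supplied, and the one-terminal-per-loss-component observation, while correct, does not by itself close the gap.

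The paper avoids the problem by opening the Euler tour at the right two vertices instead of closing it and pruning afterward. Pick terminals $u,v$ of $K$ whose tree-path $P_{uv}$ has maximum cost $\Delta(K)$ (the terminal diameter of $K$). A depth-first traversal of $K$ from $u$ to $v$ crosses each edge of $P_{uv}$ once and every other edge twice, giving a walk of cost $2c_K - \Delta(K)$; shortcutting yields a Hamilton \emph{path} on the terminals of $K$ of cost at most $2c_K - \Delta(K)$, which is already a spanning tree --- no deletion needed. Separately, every Steiner vertex $s$ of $K$ is within tree-distance $\Delta(K)/2$ of some terminal: take the nearest terminals $t_1, t_2$ to $s$ in two distinct subtrees off $s$, and note $c(P_{s,t_1}) \leq \tfrac{1}{2}c(P_{t_1,t_2}) \leq \Delta(K)/2$. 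The union of these $s$-to-terminal paths over the at most $b$ Steiner vertices of $K$ is a valid loss forest of cost at most $b\Delta(K)/2$, so $\Delta(K) \geq \frac{2}{b}\loss(K)$, and combining gives the per-component bound. The moral: the saving must be read off the doubled Euler walk itself (via the choice of endpoints), not recovered afterwards from a single Hamilton-cycle edge.
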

\begin{proof}
Recall that $\K(T^*)$ is the set of full components of tree $T^*$.
Now consider a full component $K \in\K(T^*)$. We will now show that
there is a minimum-cost spanning tree of $G[K]$ whose cost is at
most $2c_K -\frac{2}{b}\loss(K)$. By repeating this argument for all
full components $K \in \K(T^*)$, adding the resulting bounds, and
applying Fact \ref{fact:modloss}, we obtain the lemma.

For terminals $r, s \in K$, let $P_{rs}$ denote the unique
$r,s$-path in $K$. Pick $u, v \in K$ such that $c(P_{uv})$ is
maximal. Define the \emph{diameter}
 $\Delta(K) := c(P_{uv})$.
Do a depth-first search traversal of $K$ starting in $u$ and ending
in $v$. The resulting walk in $K$ traverses each edge not on
$P_{uv}$ twice while each edge on $P_{uv}$ is traversed once. Hence
the walk has cost $2c_K - \Delta(K)$. Using standard short-cutting
arguments it follows that the minimum-cost spanning tree of $G[K]$
has cost at most
\begin{equation}\label{bquasi:1}
2c_K-\Delta(K)
\end{equation}
as well.

Each Steiner vertex $s \in V(K) \bs R$ can connect to some terminal $v \in
K$ at cost at most $\frac{\Delta(K)}{2}$. Hence, the cost $\loss(K)$ of the
loss of $K$ is at most $b\frac{\Delta(K)}{2}$. In other words we have
$\Delta(K) \geq \frac{2}{b}\loss(K)$. Plugging this into
\eqref{bquasi:1} yields the lemma.
\end{proof}

For small values of $b$ we can obtain additional improvements via
case analysis.
\begin{lemma}\label{lem:bquasi2}
Suppose $b \in \{3, 4\}$. Fix an arbitrary optimum $r$-restricted
Steiner tree $T^*$. Given an undirected, $b$-quasi-bipartite graph
$G=(V,E)$, a set of terminals $R \subseteq V$, and non-negative
costs $c_e$ for all $e \in E$, we have
$$ \mst(G[R],c) \leq 2\opt_r - \loss(T^*). $$
\end{lemma}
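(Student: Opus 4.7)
I would follow the template of Lemma \ref{lem:bquasi}: decompose $T^*$ into its full components $\fC_r(T^*)$ and prove the per-component inequality $\mst(G[V(K)]) \leq 2c_K - \loss(K)$ for each $K \in \fC_r(T^*)$. Summing over $K$ and applying Fact \ref{fact:modloss} together with $\sum_K c_K = \opt_r$ then yields the lemma. The DFS-traversal argument of Lemma \ref{lem:bquasi} already exhibits a spanning tree of $G[V(K)]$ of cost at most $2c_K - \Delta(K)$, where $\Delta(K)$ denotes the length of a longest terminal-to-terminal path in $K$. So it suffices to prove the sharpening $\Delta(K) \geq \loss(K)$ whenever $K$ has at most $b \in \{3,4\}$ Steiner vertices (compared with the weaker $\Delta(K) \geq (2/b)\loss(K)$ used before).

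The plan is to case-analyze on the \emph{Steiner sub-tree} $K_S$ of $K$, namely the subgraph of $K$ induced by the Steiner vertices, which is connected since it is obtained from the tree $K$ by deleting the terminal leaves. For $|V(K_S)| \leq 4$, $K_S$ is either a path on $b$ nodes or, only when $b=4$, the star $K_{1,3}$ (``Y''-shape). In the path case, label the Steiners $s_1, \ldots, s_b$ along the path with inter-Steiner edges of cost $e_1, \ldots, e_{b-1}$, and let $\tau_i^-, \tau_i^+$ denote the cheapest and most expensive terminal-edge costs at $s_i$ (the endpoints $s_1, s_b$ have terminal neighbors since they are Steiner leaves of $K_S$, else they would be leaves of $K$). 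The loss structure that keeps every $e_i$ plus whichever of $s_1$'s or $s_b$'s cheapest terminals is smaller is feasible with cost $\sum_i e_i + \min(\tau_1^-, \tau_b^-)$, and the terminal path from the most expensive terminal at $s_1$ through $K_S$ to the most expensive one at $s_b$ gives $\Delta(K) \geq \tau_1^+ + \sum_i e_i + \tau_b^+ \geq \sum_i e_i + \min(\tau_1^-, \tau_b^-)$, as required.

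For the Y-shape with center $s_0$ and leaves $s_1, s_2, s_3$, set $e_i := c(s_0 s_i)$ and define $\tau_i^-, \tau_i^+$ as above for each leaf (each leaf has terminal neighbors, again by the full-component property). Relabelling, assume $\tau_1^- \leq \min(\tau_2^-, \tau_3^-)$. Consider the loss structure grouping $\{s_0, s_2, s_3\}$ (keeping $e_2, e_3$ and attaching the cheaper of $s_2$'s or $s_3$'s terminals) and isolating $\{s_1\}$ (with its cheapest terminal); its cost is $e_2 + e_3 + \min(\tau_2^-, \tau_3^-) + \tau_1^-$. The diameter path $s_2\text{-term} \to s_0 \to s_3\text{-term}$ satisfies $\Delta(K) \geq \tau_2^+ + e_2 + e_3 + \tau_3^+$, which dominates the loss because $\tau_i^+ \geq \tau_i^-$ and $\max(\tau_2^-, \tau_3^-) \geq \tau_1^-$ by the relabelling. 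The main obstacle is identifying the right partition of the Steiner vertices in the Y-case: one must isolate the leaf with the globally cheapest terminal, since this is what aligns the two ``loose'' terminal costs in the loss bound with the two endpoints of the diameter path; once this is observed, the remaining inequality is a one-line consequence of $\tau^+ \geq \tau^-$.
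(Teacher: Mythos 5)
Your proof is correct and follows essentially the same route as the paper: decompose $T^*$ into full components, use the DFS/short-cutting bound $\mst(G[K]) \leq 2c_K - (\text{length of a terminal--terminal path})$, and case-split on whether the Steiner subtree is a path or the star $K_{1,3}$, exhibiting in each case a loss candidate whose cost is dominated by the path length. The only cosmetic differences are that you phrase the bound via the true diameter $\Delta(K)$ rather than the fixed path $P_{uv}$ the paper uses, and your choice of which Steiner leaf to ``isolate'' in the star case is stated in terms of the minimum incident terminal cost $\tau_1^-$ rather than the paper's relabelling of arbitrary chosen neighbors, but both are the same idea; also note the spanning tree you construct lives in $G[K]$ (terminals of $K$ only), not $G[V(K)]$ as written.
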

\begin{proof}
As in the proof of Lemma \ref{lem:bquasi} it suffices to prove that,
for each full component $K \in \K(T^*)$, there is a minimum-cost
spanning tree of $G[K]$ whose cost is at most $2c_K - \loss(K)$, for
then we can add the bound over all such $K$ to get the desired
result. For terminals $r, s \in K$, let $P_{rs}$ again denote the
unique $r,s$-path in $K$.

Notice that the Steiner nodes (there are at most $b$ of them) in the
full component $K$ either form a path, or else there are 4 of them
and they form a star.

\begin{description}
\item[Case 1:] the Steiner nodes in $K$ form a path. Let $x$ and $y$
be the Steiner nodes on the ends of this path. Let $u$ (resp.\ $v$)
be any terminal neighbour of $x$ (resp.\ $y$); see Figure
\ref{fig:case}(i) for an example. Perform a depth-first search in
$K$ starting from $u$ and ending at $v;$ the cost of this search is
$2c_K-c(P_{uv})$. By standard short-cutting arguments it follows
that $2c_K-c(P_{uv})$ is an upper bound on $\mst(G[K])$. On the
other hand, since $P_{uv} \bs \{ux\}$ is a candidate for the loss of
$K$, we know that $\loss(K) \leq c(P_{uv} \bs \{ux\}) \leq
c(P_{uv})$. Therefore we obtain
\begin{equation}\label{eq:losss}\mst(G[K]) \leq 2c_K-c(P_{uv}) \leq
2c_K-\loss(K).\end{equation}

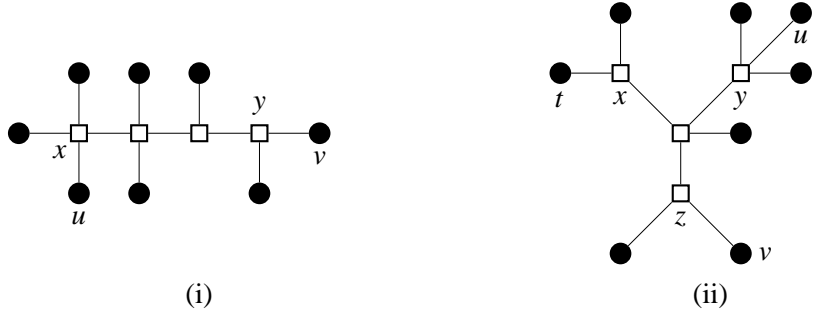
\begin{figure}
\begin{center} \leavevmode
\begin{pspicture}(0.8,0)(12.8,4.4)

\psset{unit=.8}

\steiner{2,3}{ls1} \steiner{3,3}{ls2} \steiner{4,3}{ls3}
\steiner{5,3}{ls4} \terminal{1,3}{lt1a} \terminal{2,2}{lt1b}
\terminal{2,4}{lt1c} \terminal{3,2}{lt2a} \terminal{3,4}{lt2b}
\terminal{4,4}{lt3} \terminal{6,3}{lt4a} \terminal{5,2}{lt4b}

\uput{.3}[225](2,3){$x$} \uput{.3}[90](5,3){$y$}
\uput{.3}[270](2,2){$u$} \uput{.3}[270](6,3){$v$}

 { \psset{linewidth=.1mm} \ncline{ls1}{ls2}
\ncline{ls2}{ls3} \ncline{ls3}{ls4} \ncline{ls1}{lt1a}
\ncline{ls1}{lt1b} \ncline{ls1}{lt1c} \ncline{ls2}{lt2a}
\ncline{ls2}{lt2b} \ncline{ls3}{lt3} \ncline{ls4}{lt4a}
\ncline{ls4}{lt4b} }

\uput{.5}[270](4,1){(i)}

\steiner{12,3}{rs1} \steiner{11,4}{rs2} \steiner{13,4}{rs3}
\steiner{12,2}{rs4} \terminal{13,3}{rt1} \terminal{10,4}{rt2a}
\terminal{11,5}{rt2b} \terminal{13,5}{rt3a} \terminal{14,5}{rt3b}
\terminal{14,4}{rt3c} \terminal{11,1}{rt4a} \terminal{13,1}{rt4b}

\uput{.3}[270](11,4){$x$} \uput{.3}[270](13,4){$y$}
\uput{.3}[270](12,2){$z$} \uput{.3}[270](10,4){$t$}
\uput{.3}[270](14,5){$u$} \uput{.3}[0](13,1){$v$}

 { \psset{linewidth=.1mm} \ncline{rs1}{rs2}
\ncline{rs1}{rs3} \ncline{rs1}{rs4} \ncline{rs1}{rt1}
\ncline{rs2}{rt2a} \ncline{rs2}{rt2b} \ncline{rs3}{rt3a}
\ncline{rs3}{rt3b} \ncline{rs3}{rt3c} \ncline{rs4}{rt4a}
\ncline{rs4}{rt4b} }

\uput{.5}[270](12.5,1){(ii)}

\end{pspicture}
\caption{\label{fig:case} The figure shows the two types of full
components when $b \leq 4$. On the left is a full component where
the Steiner nodes form a path, and on the right is a full component
where the Steiner nodes form a star with 3 tips.}
\end{center}
\end{figure}

\item[Case 2:] the Steiner nodes in $K$ form a star. Let the
tips of the star be $x, y, z$ and let $t, u, v$ be any terminal
neighbours of $x, y, z$ respectively; see Figure \ref{fig:case}(ii)
for an example. Without loss of generality, we may assume that
$c_{xt} \leq c_{yu} \leq c_{zv}$. As before, a depth-first search in
$K$ starting from $u$ and ending at $v$ has cost $2c_K-c(P_{uv})$
and this is an upper bound on $\mst(G[K])$. On the other hand,
$P_{uv} \bs \{yu\} \cup \{xt\}$ is a candidate for the loss of $K$
and so $\loss(K) \leq c(P_{uv}) - c_{yu} + c_{xt} \leq c(P_{uv})$.
We hence obtain Equation \eqref{eq:losss} as in the previous case.
\qedhere
\end{description}\end{proof}

We are ready to prove our main theorem. We restate it using the
notation introduced in the last two sections.

\setcounter{ntheorem}{\value{theorem}}
\setcounter{theorem}{\value{nntheorem}}
\begin{theorem}
Given an undirected, $b$-quasi-bipartite graph $G=(V,E)$, terminals
$R \subseteq V$, and a fixed constant $r \geq 2$, Algorithm
\ref{alg} returns a feasible Steiner tree $T^p$ with
$$
  c(T^p) \leq \left\{\begin{array}{l@{\quad : \quad}l}
      1.279 \cdot \opt_r & b=1 \\
      (1+1/e) \cdot \opt_r & b \in \{2,3,4\} \\
      \left(1+\frac{1}{2}\ln\left(3-\frac2b\right)\right)\opt_r & b \geq 5.
      \end{array}\right.
$$
\end{theorem}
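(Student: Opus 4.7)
The plan is to plug the bounds from Lemmas~\ref{lem:bquasi} and~\ref{lem:bquasi2} into Lemma~\ref{lem:tpcost} and then to optimize the resulting expression as a function of the ratio $\gamma := \loss(T^*)/\opt_r$. The first observation is that $\smst(G[R],c) = \mst(G[R],c)$: since $\ess^0 = \binom{R}{2}$ introduces no Steiner vertices, we have $V(\ess^0)\cup R = R$, so for every partition $\pi$ of $R$ the Steiner rank $\rs(\pi)$ equals the usual rank $r(\pi)$, and the two quantities coincide by Theorem~\ref{thm:mst}. I will also use the standard fact (Karpinski--Zelikovsky \cite{KZ97}, cf.\ \cite{RZ05}) that $\loss(T^*) \leq \tfrac{1}{2}\opt_r$; this follows by bounding the loss of each full component by half of its cost via a short-cutting argument analogous to the one used in the proof of Lemma~\ref{lem:bquasi}.

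Now I treat the three cases separately. In each case, writing $\ell = \loss(T^*)$ and substituting the appropriate MST bound into Lemma~\ref{lem:tpcost} gives
\[
  c(T^p) \;\leq\; \opt_r + \ell\cdot\ln\!\left(1 + \frac{M - \opt_r}{\ell}\right),
\]
where $M$ is the right-hand side of the MST bound. Setting $\gamma := \ell/\opt_r \in (0,\tfrac12]$ reduces everything to a one-variable calculus problem.

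For $b=1$, Lemma~\ref{lem:bquasi} gives $M = 2\opt_r - 2\ell$, yielding the ratio
\[
  \frac{c(T^p)}{\opt_r} \;\leq\; 1 + \gamma\ln\!\left(\frac{1-\gamma}{\gamma}\right).
\]
The stationary condition $\ln\tfrac{1-\gamma}{\gamma} = \tfrac{1}{1-\gamma}$ is equivalent to $u\ln u = u+1$ with $u = (1-\gamma)/\gamma$, which has a unique root $u\approx 3.59$, giving $\gamma \approx 0.218$ and a maximum value of approximately $1.279$. For $b=2$, Lemma~\ref{lem:bquasi} gives $M = 2\opt_r - \ell$; for $b\in\{3,4\}$, Lemma~\ref{lem:bquasi2} gives the same bound. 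In both sub-cases the ratio simplifies to $1 + \gamma\ln(1/\gamma)$, whose derivative $\ln(1/\gamma) - 1$ vanishes at $\gamma = 1/e < 1/2$, producing the value $1 + 1/e$. For $b\geq 5$, Lemma~\ref{lem:bquasi} gives $M = 2\opt_r - (2/b)\ell$, and the ratio becomes $1 + \gamma\ln\!\bigl(\tfrac{1}{\gamma} + 1 - \tfrac{2}{b}\bigr)$. A direct calculation shows that the derivative of this expression is positive throughout $(0,\tfrac12]$ whenever $b\geq 5$ (the stationary point of the unconstrained function lies above $1/2$), so the maximum on the feasible range is attained at the boundary $\gamma = 1/2$, which evaluates to exactly $1 + \tfrac{1}{2}\ln(3 - 2/b)$.

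The main obstacle is the monotonicity check for the case $b\geq 5$: one must verify that $h_b(\gamma) := \gamma\ln\!\bigl(\tfrac{1}{\gamma} + 1 - \tfrac{2}{b}\bigr)$ is still increasing at $\gamma = 1/2$. Setting $u = 1/\gamma + 1 - 2/b$, the condition $h_b'(\gamma) > 0$ rewrites as $\ln u > 1 - (1 - 2/b)/u$, and at $\gamma = 1/2$ this becomes $\ln(3 - 2/b) > 1 - (1 - 2/b)/(3 - 2/b) = (2 - 4/b)/(3 - 2/b)$; plugging $b = 5$ gives $\ln(2.6) \approx 0.956 > 0.769$, and the inequality only strengthens as $b$ grows. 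A mild annoyance is the boundary case $b = 2$: since Lemma~\ref{lem:bquasi} with $b = 2$ already gives $M \leq 2\opt_r - \ell$, it falls under the middle case and no separate argument is needed. Assembling the three bounds recovers the three branches of the theorem, and since $T^p$ is returned as a feasible Steiner tree by Algorithm~\ref{alg}, this completes the proof.
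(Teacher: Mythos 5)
Your proposal is correct and follows essentially the same route as the paper: substitute the MST bounds of Lemmas~\ref{lem:bquasi} and~\ref{lem:bquasi2} into Lemma~\ref{lem:tpcost} (after noting $\smst(G[R],c)=\mst(G[R],c)$), invoke Karpinski--Zelikovsky's $\loss(T^*)\le\tfrac12\opt_r$, and maximize the resulting one-variable expression over $\gamma\in(0,\tfrac12]$ by the identical case split on $b$. The only (minor, welcome) additions are the explicit stationary-point computation for $b=1$ and the explicit verification that the $b\ge5$ expression is increasing at $\gamma=\tfrac12$, which the paper asserts without calculation.
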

\begin{proof}
Using Lemma \ref{lem:tpcost} we see that
\begin{eqnarray}
  c(T^p) & \leq & \opt_r + \loss(T^*) \cdot \ln\left(1+\frac{\smst(G[R],c)
  - \opt_r}{\loss(T^*)}\right) \notag \\
  & = & \opt_r + \loss(T^*) \cdot \ln\left(1+\frac{\mst(G[R],c)
  - \opt_r}{\loss(T^*)}\right). \label{donut}
\end{eqnarray}
The second equality above holds because $G[R]$ has no Steiner
vertices. Applying the bound on $\mst(G[R],c)$ from Lemma
\ref{lem:bquasi} yields
\begin{equation}\label{main:1}
  c(T^p) \leq \opt_r \cdot \left[ 1 + \frac{\loss(T^*)}{\opt_r} \cdot
  \ln\left(1-\frac{2}{b}+\frac{\opt_r}{\loss(T^*)} \right)\right].
\end{equation}
Karpinski and Zelikovsky~\cite{KZ97} show that $\loss(T^*) \leq
\frac{1}{2}\opt_r$. We can therefore obtain an upper-bound on the
right-hand side of \eqref{main:1} by bounding the maximum value of
function $x\ln(1-2/b+1/x)$ for $x \in [0,1/2]$. We branch into
cases:
\begin{description}
\item[$b=1$:] The maximum of $x\ln(1/x - 1)$ for $x \in [0,1/2]$ is attained
  for $x \approx 0.2178$. Hence, $x\ln(1/x-1) \leq 0.279$ for $x \in [0,1/2]$.
\item[$b=2$:] The maximum of $x\ln(1/x)$ is attained for $x=1/e$ and hence
  $x\ln(1/x) \leq 1/e$ for $x \in [0,1/2]$.
\item[$b \in \{3, 4\}$:] We use Equation~\eqref{donut} together with Lemma \ref{lem:bquasi2}
in place of Lemma \ref{lem:bquasi}; the subsequent analysis and
result are the same as in the previous case.
\item[$b\geq 5$:] The function $x\ln(1-2/b+1/x)$ is
  increasing in $x$ and its maximum is attained for $x=1/2$. Thus,
  $x\ln(1-2/b+1/x) \leq \frac{1}{2} \ln(3-2/b)$ for $x \in [0,1/2]$.
\end{description}
The three cases above conclude the proof of the theorem.
\end{proof}
\setcounter{theorem}{\value{ntheorem}}

\section{Properties of \ensuremath{\mathrm{(P}_{ST}^{\ess}\mathrm{)}}}\label{sec:gap}
In this section, we first prove that the linear program
\ensuremath{\mathrm{(P}_{ST}^{\ess}\mathrm{)}} is gradually weakened
as the algorithm progresses (i.e., as more full components are added
to $\ess)$. Then we describe bounds on the integrality gap of the
new LP, and its strength compared to other LPs for the Steiner tree
problem.

\begin{lemma}\label{lemma:gap++}
If $\ess \subset \ess'$, then the integrality gap of \eqref{p-st} is
at most the integrality gap of \foob.
\end{lemma}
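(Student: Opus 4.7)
The plan is to observe first that the integer optima of \eqref{p-st} and of \foob\ both equal $\opt_r$: by Definition~\ref{def:sdecomp} the integer-feasible points of each program correspond, via a cost-preserving bijection, to $r$-restricted Steiner trees. Writing $\mathrm{LP}^\ess$ for the optimum of \eqref{p-st} and $\mathrm{LP}^{\ess'}$ for that of \foob, the two integrality gaps are $\opt_r/\mathrm{LP}^\ess$ and $\opt_r/\mathrm{LP}^{\ess'}$, so the lemma follows once I show $\mathrm{LP}^{\ess'} \leq \mathrm{LP}^{\ess}$. By induction on $|\ess' \setminus \ess|$ it suffices to handle $\ess' = \ess \cup \{K^*\}$ for a single $K^* \in \fC_r \setminus \ess$.

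To prove $\mathrm{LP}^{\ess'} \leq \mathrm{LP}^{\ess}$, I will exhibit a cost-preserving map sending any feasible $x$ for \eqref{p-st} to a feasible $x'$ for \foob. The natural choice is $x'_e := x_e$ for $e \in E(\ess)$, $x'_e := x_{K^*}$ for $e \in E(K^*)$, and $x'_K := x_K$ for $K \in \fC_r \setminus \ess'$. The edge-disjointness of distinct full components (Section~\ref{sec:assumptions}) makes this well-defined on $E(\ess') = E(\ess) \cup E(K^*)$, and cost-preservation is immediate from $\sum_{e \in E(K^*)} c_e = c_{K^*}$.

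The main work is verifying that $x'$ satisfies \eqref{eq:p} for every $\pi' \in \sppart[\ess']$. Given $\pi'$, I restrict to $\pi \in \sppart[\ess]$ by intersecting each part of $\pi'$ with $V(\ess) \cup R$ and discarding empty parts. Three easy facts fall out of the observation that every terminal lies in $V(\ess)\cup R$: (i)~$\rs(\pi)=\rs(\pi')$, since terminal-containing parts are in bijection under the restriction; (ii)~$\rc^\pi_K = \rc^{\pi'}_K$ for every $K \in \fC_r \setminus \ess'$, for the same reason; and (iii)~an edge $e \in E(\ess)$ crosses $\pi$ iff it crosses $\pi'$, so $E_\pi(\ess) = E_{\pi'}(\ess') \cap E(\ess)$. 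Plugging the definition of $x'$ into the left-hand side of the $\pi'$-constraint and applying (i)--(iii) will reduce it to the $\pi$-constraint on $x$, provided that
\[ |E_{\pi'} \cap E(K^*)| \;\geq\; \rc^{\pi'}_{K^*}. \]

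I expect this last inequality to be the only nontrivial step, and I plan to argue it by tree-connectivity. Since $K^*$ is a connected tree, contracting each part of $\pi'$ leaves a connected multigraph whose vertex count equals the number of parts of $\pi'$ that meet $V(K^*)$; hence at least that-many-minus-one edges of $K^*$ must cross $\pi'$. Because any part of $\pi'$ that contains a terminal of $K^*$ certainly meets $V(K^*)$, this lower bound is at least $\rc^{\pi'}_{K^*}$, closing the feasibility argument and hence the proof.
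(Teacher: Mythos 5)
Your proposal is correct and follows essentially the same route as the paper: induct on $|\ess'\setminus\ess|$, extend a feasible $x$ to $x'$ by spreading $x_{K^*}$ over the edges of $K^*$, restrict $\pi'$ to $\pi$, and reduce feasibility to the inequality $|E_{\pi'}\cap E(K^*)|\geq\rc^{\pi'}_{K^*}$, which you justify by the same connectivity-of-$K^*$ observation the paper uses. The only cosmetic difference is that you spell out the (implicit in the paper) fact that both LPs have integer optimum $\opt_r$.
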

\begin{proof}
We consider only the case where $\ess' = \ess \cup\{J\}$ for some
full component $J;$ the general case then follows by induction on
$|\ess' \bs \ess|$.

Let $x$ be any feasible primal point for \eqref{p-st} and define the
\emph{extension} $x'$ of $x$ to be a primal point of \foob, with
$x'_e = x_{J}$ for all $e \in E(J)$ and $x'_Z = x_Z$ for all $Z \in
(\K \bs \ess') \cup E(\ess)$. We claim that $x'$ is feasible for
\foob. Since $x$ and $x'$ have the same objective value, this will
prove Lemma \ref{lemma:gap++}.

It is clear that $x'$ satisfies constraints
\eqref{p-st-nn}, 
so now let us show that $x'$ satisfies the partition inequality
\eqref{eq:p} in \foob. Fix an arbitrary partition $\pi'$ of $
V(\ess')$, and let $\pi$ be the restriction of $\pi'$ to $ V(\ess)$.
We get
\begin{equation}\label{pizza}\sum_{e \in E_{\pi'}(\ess')} x'_e + \sum_{K \in \fC_r \bs
{\ess'}} \rc^{\pi'}_Kx'_K = \left(\sum_{e \in E_{\pi}(\ess)} x_e +
\sum_{K \in \fC_r \bs {\ess}} \rc^{\pi}_Kx_K\right) + |E_{\pi'}\cap
E(J)|x_J - \rc_{J}^{\pi}x_J.\end{equation} Now $J$ spans at least
$\rc_{J}^{\pi}+1$ parts of $\pi'$, and it follows that
$|E_{\pi'}\cap E(J)| \geq \rc_{J}^{\pi}$. Hence, using Equation
\eqref{pizza}, the fact that $x$ satisfies constraint \eqref{eq:p}
for $\pi$, and the fact that $\rs(\pi) =\rs(\pi')$, we have
$$\sum_{e \in E_{\pi'}(\ess')} x'_e + \sum_{K \in \fC_r \bs {\ess'}}
\rc^{\pi'}_Kx'_K \geq \sum_{e \in E_{\pi}(\ess)} x_e + \sum_{K \in
\fC_r \bs {\ess}} \rc^{\pi}_Kx_K \geq \rs(\pi)-1 = \rs(\pi')-1.$$ So
$x'$ satisfies \eqref{eq:p} for $\pi'$.
\end{proof}

In 1997, Warme \cite{War97} introduced a new linear program for the
Steiner tree problem. He observed (as did the authors of \cite{PS00}
in the same year) that full components allow a reduction from the
Steiner tree problem to the \emph{spanning-tree-in-hypergraph}
problem. He also gave an LP relaxation for spanning trees in
hypergraphs. That LP turns out to be exactly as strong as our own
LP; see \cite[Corollary 3.19]{KP07a} for a proof. Now, Polzin et
al.~\cite{PVd03} proved that Warme's relaxation is stronger than the
bidirected cut relaxation, and Goemans~\cite{Go94} proved that the
(graph) Steiner partition inequalities are valid for the bidirected
cut formulation. Hence, as stated previously, using full components
as in \eqref{p-st} strengthens the Steiner partition inequalities.

\subsection{A lower bound on the integrality gap of \newLP}
Note that when $\ess = \tbinom{R}{2}$,
\newLPsp and \eqref{p-st} are equivalent LPs: for each
terminal-terminal edge $uv$, the full component variable $x_{\{u,
v\}}$ of the former corresponds to the edge variable $x_{uv}$ of the
latter. Hence although we consider the simpler LP \newLP\ in this
section, the results apply also to the LP used in the first
iteration of \rz.

\label{sec:gap-lower-bound} Goemans \cite{AC04} gave a family of
graphs upon which, in the limit, the integrality gap of the
bidirected cut relaxation is $\frac{8}{7}$. Interestingly, it can be
shown that once you preprocess these graphs as described in Section
\ref{sec:assumptions}, the gap completely disappears. Here we
describe another example, due to Skutella \cite{Sk06}. It shows not
only that the gap of the bidirected cut relaxation is at least
$\frac{8}{7}$, but that the gap of our new formulation (including
preprocessing) is at least $\frac{8}{7}$. The example is
quasi-bipartite.

The Fano design is a well-known finite geometry consisting of 7
\emph{points} and 7 \emph{lines}, such that every point is on 3
lines, every line contains 3 points, any two lines meet in a unique
point, and any two points lie on a unique common line. We construct
Skutella's example by creating a bipartite graph, with one side
consisting of one node $n_p$ for each point $p$ of the Fano design,
and the other side consisting of one node $n_\ell$ for each line
$\ell$ of the Fano design. Define $n_p$ and $n_\ell$ to be adjacent
in our graph if and only if $p$ does \emph{not} lie on $\ell$. Then
it is easy to see this graph is 4-regular, and that given any two
nodes $n_1, n_2$ from one side, there is a node from the other side
that is adjacent to neither $n_1$ nor $n_2$. Let one side be
terminals, the other side be Steiner nodes, and then attach one
additional terminal to all the Steiner nodes. We illustrate the
resulting graph in Figure \ref{fig:lp-gap}.

\begin{figure}
\begin{center} \leavevmode
\begin{pspicture}(0.8,0.8)(12.6,4.5)

\psset{unit=.9}

\terminal{1,1}{t1}
\terminal{3,1}{t2}
\terminal{5,1}{t3}
\terminal{7,1}{t4}
\terminal{9,1}{t5}
\terminal{11,1}{t6}
\terminal{13,1}{t7}
\steiner{1,3}{s1}
\steiner{3,3}{s2}
\steiner{5,3}{s3}
\steiner{7,3}{s4}
\steiner{9,3}{s7}
\steiner{11,3}{s5}
\steiner{13,3}{s6}
\terminal{7,5}{root}

\ncline[linewidth=.1mm]{s5}{root}
\ncline[linewidth=.1mm]{s6}{root}

\ncline{s1}{root}
\ncline{s2}{root}
\ncline{s3}{root}
\ncline{s4}{root}
\ncline{t1}{s1}
\ncline{t1}{s2}
\ncline{t1}{s3}
\ncline{t1}{s4}

\ncline{t2}{s1}
\ncline{t2}{s2}
\ncline{t2}{s5}
\ncline{t2}{s6}
\ncline{t3}{s1}
\ncline{t3}{s3}
\ncline{t3}{s5}
\ncline{t4}{s1}
\ncline{t4}{s4}
\ncline{t4}{s6}
\ncline{t5}{s2}
\ncline{t5}{s3}
\ncline{t5}{s6}
\ncline{t6}{s2}
\ncline{t6}{s4}
\ncline{t6}{s5}
\ncline{t7}{s3}
\ncline{t7}{s4}
\ncline{t7}{s5}
\ncline{t7}{s6}

\ncline[linecolor=red,linewidth=.5mm]{t3}{s7}
\ncline[linecolor=red,linewidth=.5mm]{t4}{s7}
\ncline[linecolor=red,linewidth=.5mm]{t5}{s7}
\ncline[linecolor=red,linewidth=.5mm]{t6}{s7}
\ncline[linecolor=red,linewidth=.5mm]{s7}{root}

\end{pspicture}
\end{center}
\caption{\label{fig:lp-gap} Skutella's example, which shows that the
bidirected cut formulation and our new formulation both have a gap
of at least $\frac{8}{7}$. The shaded edges denote one of the
quasi-bipartite full components on 5 terminals.
 \ignore{The dashed edges indicate a flow of value
1 from the root to another terminal.} }
\end{figure}
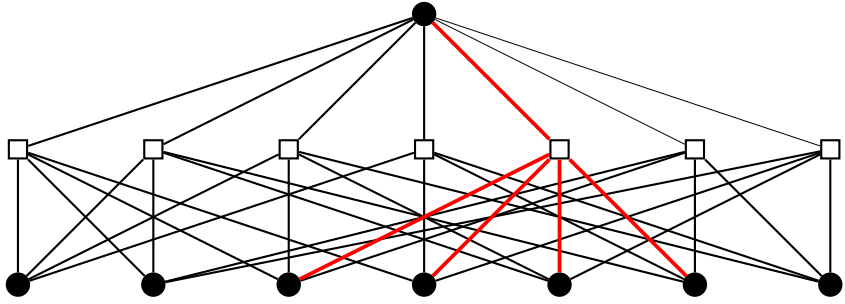

Each Steiner node is in a unique 5-terminal quasi-bipartite full
component. There are 7 such full components. Denote the family of
these 7 full components by $\cee$.

\begin{claim}
Let $x^*_K = \frac{1}{4}$ for each $K \in \cee$, and $x^*_K = 0$
otherwise. Then $x^*$ is feasible for
\newLP.
\end{claim}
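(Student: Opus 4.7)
The plan is to verify the partition inequality \eqref{eq:p} of \newLP\ for every partition $\pi \in \sppart[\ess]$, where $\ess = \tbinom{R}{2}$. Since $V(\ess) = R$, the relevant partitions are partitions of the terminal set $R$, and any such $\pi$ with $p$ parts automatically has $\rs(\pi) = p$. Because $x^*$ is supported on $\cee$ and vanishes on all two-terminal components (equivalently, on all edge variables in the \eqref{p-st} formulation), the inequality reduces to showing
$$\frac{1}{4}\sum_{K \in \cee} \rc^\pi_K \;\geq\; p-1$$
for every partition $\pi = \{V_1,\dots,V_p\}$ of $R$.

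The key step is a double-counting argument. Writing $\rc^\pi_{K_\ell} = |\{j : V_j \cap K_\ell \neq \emptyset\}| - 1$ for the star $K_\ell$ centered at the Steiner node $n_\ell$, and interchanging the order of summation, I obtain
$$\sum_{K_\ell \in \cee}\rc^\pi_{K_\ell} \;=\; -7 \;+\; \sum_{j=1}^p |\{\ell : V_j \cap K_\ell \neq \emptyset\}|.$$
Thus it suffices to establish $\sum_{j} |\{\ell : V_j \cap K_\ell \neq \emptyset\}| \geq 4p+3$.

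To bound each inner count I use the Fano structure. Let $r_0$ denote the root terminal and $t_1,\dots,t_7$ the bottom terminals (indexed by points of the design), so that $K_\ell$ spans $r_0$ together with the four $t_i$ with $i \notin \ell$. The unique part $V_{j_0}$ containing $r_0$ meets every $K_\ell$ and therefore contributes $7$. For any other part $V_j$, set $P_j := \{i : t_i \in V_j\} \neq \emptyset$; the condition $V_j \cap K_\ell \neq \emptyset$ is equivalent to $P_j \not\subseteq \ell$, and since every point lies on exactly three lines and every pair of points on a unique line, one has $|\{\ell : P_j \subseteq \ell\}|\leq 3$ when $|P_j|=1$ and $\leq 1$ when $|P_j|\geq 2$. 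Hence each non-root part contributes at least $7-3=4$, and summing yields $7 + 4(p-1) = 4p+3$. Non-negativity of $x^*$ is immediate, so $x^*$ is feasible.

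The only real obstacle is the combinatorial bookkeeping for $\rc^\pi_K$ under the Fano-plane structure; once the sum is rewritten by swapping $\ell$ and $j$, the argument collapses to the elementary fact that any nonempty subset of the seven points fails to be contained in at least four of the seven lines. The presence of the root terminal $r_0$ in every star supplies the extra $+3$ over the naive bound $4p$, which is precisely what makes the uniform $\tfrac14$-assignment tight.
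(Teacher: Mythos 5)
Your proof is correct and takes essentially the same double-counting route as the paper: exploit that the root terminal lies in every $K\in\cee$ (so each full component always meets the root part), interchange a sum over full components with a sum over parts, and bound the resulting incidence count. The paper organizes the count slightly differently: it picks a single representative terminal $r_j$ from each non-root part, observes that $\rc_K^\pi \geq |K\cap\{r_1,\dots,r_m\}|$ because $K$ meets $\pi_0$ in addition to each $\pi_j$ with $r_j\in K$, and then uses only the degree fact that each bottom terminal lies in exactly $4$ members of $\cee$. Your version instead double-counts part--component incidences exactly, writing $\sum_\ell \rc_{K_\ell}^\pi = -7 + \sum_j |\{\ell: V_j\cap K_\ell\neq\emptyset\}|$, and then bounds each non-root part's count via the fact that a nonempty set of Fano points is contained in at most $3$ lines. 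This invokes slightly more of the Fano structure (two points lie on a unique line, not just that each point is on $3$ lines), though you only genuinely need the $|P_j|=1$ case. The paper's variant is marginally more economical and would generalize to any configuration where the root is universal and every other terminal has component-degree at least $4$; yours makes the Fano geometry visible and gives a stronger per-part bound when $|P_j|\geq 2$, which is harmless surplus here. Both arguments are complete and correct.
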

\begin{proof}
It is immediate that $x^*$ satisfies constraints \eqref{p-st-nn}. It
remains only to show that $x^*$ meets constraint \eqref{eq:p}. Let
$\pi$ be an arbitrary partition, with parts $\pi_0, \dotsc, \pi_m$
such that $\pi_0$ contains the extra ``top'' terminal. If we can
show that $\sum_K x^*_K \rc_K^\pi \geq m$ then we will be done,
since $\pi$ was arbitrary. For each $i=1, \dotsc, m$, let $r_i$ be
any terminal in $\pi_i$. Note that each $r_i$ lies in exactly 4 full
components from $\cee$. Furthermore, every full component $K \in
\cee$ satisfies $\rc_K^\pi \geq |K \cap \{r_1, \dotsc, r_m\}|$,
since that full component meets $\pi_0$ as well as each part $\pi_j$
such that $r_j \in K$. Hence \begin{equation}\sum_K x^*_K \rc_K^\pi
= \frac{1}{4} \sum_{K \in \cee} \rc_K^\pi \geq \frac{1}{4} \sum_{K
\in \cee} \# \{j : r_j \in K\} = \frac{1}{4} \sum_{j=1}^m \# \{K \in
\cee : r_j \in K\} = \frac{1}{4} \cdot m \cdot 4 =
m.\tag*{\qedhere}\end{equation}
\end{proof}
The objective value of $x^*$ is $\frac{35}{4}$, but the optimal
integral solution to the LP is 10, since at least 3 Steiner nodes
need to be included. Hence, the gap of our new LP is no better than
$\frac{10}{35/4}=\frac{8}{7}$.

\subsection{A gap upper bound for $b$-quasi-bipartite instances}
\label{sec:1st}

In \cite{RV99} Rajagopalan and Vazirani show that the bidirected cut
relaxation has a gap of at most $\frac{3}{2}$, if the graph is
quasi-bipartite. Since \newLPsp\ is stronger than the bidirected cut
relaxation its gap is also at most $\frac{3}{2}$ for such graphs. We
are able to generalize this result as follows.

\setcounter{gtheorem}{\value{theorem}}
\setcounter{theorem}{\value{ggtheorem}}
\begin{theorem}
On $b$-quasi-bipartite graphs, \newLPsp has an integrality gap
between $\frac{8}{7}$ and $\frac{2b+1}{b+1}$ in the worst case.
\end{theorem}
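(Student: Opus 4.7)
The lower bound $\tfrac{8}{7}$ is already witnessed by Skutella's Fano-based construction from Section~\ref{sec:gap-lower-bound}: that example is $1$-quasi-bipartite and so is $b$-quasi-bipartite for every $b \geq 1$, so no additional work is needed there. For the upper bound $\tfrac{2b+1}{b+1}$ my plan is to combine the algorithmic analysis of Sections~\ref{sec:zel}--\ref{sec:ana} with the Moore-style estimate of Lemma~\ref{lem:bquasi}. The case $b=1$ already recovers the known $\tfrac{3}{2}$ bound from Rajagopalan--Vazirani via the fact that $\newLPsp$ is stronger than the bidirected cut relaxation, and the plan generalizes that argument.

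By Lemma~\ref{lemma:gap++} it is enough to upper-bound the integrality gap of the final relaxation $\mathrm{(P}_{ST}^{\ess^p}\mathrm{)}$ produced by Algorithm~\ref{alg} (instead of $\newLP$ itself). Write $z^p$ for its LP value. The final dual $y^p$ computed by the algorithm is feasible for the corresponding dual LP, so weak duality together with the identity $c(T^p) = \smst(G[\ess^p]) + \loss(T^p)$ established in the proof of Lemma~\ref{lem:tpcost} yields
\[
z^p \;\geq\; \smst(G[\ess^p]) \;=\; c(T^p) - \loss(T^p).
\]
Hence, if one can establish the structural estimate $\loss(T^p) \leq \tfrac{b}{2b+1}\,c(T^p)$, then $z^p \geq c(T^p)\cdot\tfrac{b+1}{2b+1} \geq \tfrac{b+1}{2b+1}\,\opt_r$ (using $c(T^p) \geq \opt_r$), which is exactly the integrality gap bound $\tfrac{2b+1}{b+1}$.

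To prove the loss estimate, I would apply Lemma~\ref{lem:bquasi} (and Lemma~\ref{lem:bquasi2} for the cases $b \in \{3,4\}$) to each full component $K \in \fC_r(T^p)$ individually. Every such $K$ lives inside a single Steiner neighborhood of $G$ and so contains at most $b$ internal Steiner vertices, and the doubling-plus-shortcut argument in the proof of Lemma~\ref{lem:bquasi} controls $\loss(K)$ in terms of $c_K$. Summing over full components via Fact~\ref{fact:modloss} should then deliver the required global bound $\loss(T^p) \leq \tfrac{b}{2b+1}\,c(T^p)$. For $b=1$ this is immediate, since every quasi-bipartite full component is a star with some Steiner $s$ at its center, giving $\loss(K) = \min_{t \in K} c_{st} \leq c_K/|K| \leq c_K/3$.

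The main obstacle is the per-component inequality for $b \geq 2$: caterpillar-shaped full components with several Steiner vertices arranged on a path can naively only be bounded by $\loss(K)/c_K \leq \tfrac{b'}{b'+1}$, where $b'$ is the number of Steiners inside $K$, which is weaker than the target $\tfrac{b}{2b+1}$. Carrying out the plan will therefore require either exploiting the preprocessing assumptions A1--A2 to choose the canonical min-cost representative in $\fC_r$ in a way that prefers few internal Steiners when $c_K$ ties, or replacing the componentwise inequality by a more refined argument that tracks the tree-structure of $K$ together with the algorithm's selection rule $f_i$. This is where I would expect the bulk of the technical work to lie.
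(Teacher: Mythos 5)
Your lower-bound argument matches the paper exactly, and the reduction via Lemma~\ref{lemma:gap++} to bounding the gap of the final LP is a valid strategy. Your loss reformulation is also sound: by Lemma~\ref{lem:smst} one has $z^p \geq \smst(\ISC{\ess^p}) = c(T^p) - \loss(T^p)$, so establishing $\loss(T^p) \leq \tfrac{b}{2b+1}\,c(T^p)$ would finish the proof. That loss inequality is in fact true, so the target is correct; the problem is in how you propose to reach it.

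The gap is that Lemma~\ref{lem:bquasi} (and Lemma~\ref{lem:bquasi2}) is the wrong tool for this target. The depth-first-search doubling argument in those lemmas produces the inequality $\mst(G[K]) \leq 2c_K - \Delta(K)$ together with $\loss(K) \leq \tfrac{b}{2}\Delta(K)$, and those combine to bound $\mst(G[R])$ in terms of $\opt_r$ and $\loss(T^*)$; they do \emph{not} bound $\loss(K)$ in terms of $c_K$. The only route from that machinery is $\Delta(K) \leq c_K$, giving $\loss(K) \leq \tfrac{b}{2}c_K$, which is weaker than what you need and indeed exceeds $1$ once $b\geq 2$. Your caterpillar estimate $\loss(K)/c_K \leq \tfrac{b'}{b'+1}$ is a symptom of the same problem: it ignores Assumption~A2, which guarantees that every Steiner node (and, after contraction, every Steiner pseudonode) has degree at least $3$. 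Once you use that degree constraint, a full component with $s\leq b$ Steiner nodes has at least $2s+1$ edges while its loss has exactly $s$ edges, which is where the $s/(2s+1)\leq b/(2b+1)$ ratio comes from.

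The paper turns this counting into a cost bound not by choosing a canonical representative or by invoking the selection rule $f_i$, but via Lemma~\ref{lem:1st:rankdrop}: for every partition $\pi$ with positive dual weight in the Kruskal dual, contract each part to a pseudonode and count edges of the resulting tree $T'$ against its Steiner pseudonodes using the degree-$\geq 3$ property to conclude $\rs(\pi)-1 \geq \tfrac{b+1}{2b+1}\bigl(r(\pi)-1\bigr)$. Summing this inequality against the dual variables $y_\pi$ yields $\smst \geq \tfrac{b+1}{2b+1}\mst$, equivalently the loss bound you want, \emph{uniformly across unequal edge costs}, which is exactly what the per-slice counting buys you and what the DFS-doubling argument cannot deliver. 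One further, more cosmetic, difference: the paper applies this to the optimal tree $T^*$ and its full components $\ess^*$, never mentioning the algorithm, so the argument is purely polyhedral; your version routed through $\ess^p$ would also work once Lemma~\ref{lem:1st:rankdrop} is in hand, but the algorithm is unnecessary baggage for this theorem. As written, though, the central inequality is left unproved and the tools you cite do not prove it, so the proposal has a genuine gap.
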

\setcounter{theorem}{\value{gtheorem}}
\begin{proof}
The lower bound comes from Section \ref{sec:gap-lower-bound}.
We assume $G$ is $b$-quasi-bipartite, we let $T^*$ be an optimal Steiner tree,
and we let $\ess^*$ be its set of full components. Since $T^*$ is a
minimum spanning tree for $\IS{\ess^*}$, there is a corresponding
feasible dual $y$ for \eqref{d-sp}. When we convert $y$ to a dual
for \doob, we claim that $y$ is feasible: indeed, by Lemma
\ref{lem:feas} a violated full component could be used to improve
the solution, but $T^*$ is already optimal. The next lemma is the
cornerstone of our proof.

\begin{lemma}\label{lem:1st:rankdrop}
Let $\pi$ be a partition of $ V(\ess^*)$ with $y_\pi > 0$. Then
$(\rs(\pi)-1) \geq \frac{b+1}{2b+1}(r(\pi)-1)$.
\end{lemma}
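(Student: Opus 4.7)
The plan is to use the primal--dual interpretation of $y$ from Section \ref{sec:kruskal}: since $y$ is produced by Kruskal's algorithm on $\ess^*$ and $T^*$ is the MST returned, the assumption $y_\pi > 0$ forces $\pi$ to be one of the intermediate Kruskal partitions, so the parts of $\pi$ are exactly the connected components of some subforest $F \subseteq T^*$; in particular every part of $\pi$ is a subtree of $T^*$. Setting $\sigma := r(\pi) - \rs(\pi)$, an algebraic rearrangement shows the claim is equivalent to $\sigma \leq \tfrac{b}{b+1}(\rs(\pi) - 1)$.

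For each $K \in \ess^*$, I consider the restriction of $\pi$ to $V(K)$: since the intersection of two subtrees of the tree $T^*$ is connected or empty, this restriction partitions $V(K)$ into $p_K := \sigma_K + \tau_K$ subtrees of $K$, where $\sigma_K$ counts the Steiner-only ones and $\tau_K$ the terminal-containing ones. Because full components meet only at terminals and each Steiner vertex belongs to a unique full component, any Steiner-only part of $\pi$ is contained in a single $V(K)$, giving $\sigma = \sum_K \sigma_K$. Combining $\sum_K |V(K)| = |V(\ess^*)| + |\ess^*| - 1$ (from the decomposition of the tree $T^*$ into edge-disjoint full components, using $|E(K)| = |V(K)| - 1$) with $|F| = |V(\ess^*)| - r(\pi)$ yields $\sum_K p_K = r(\pi) + |\ess^*| - 1$, which telescopes to the key identity $\sum_K (\tau_K - 1) = \rs(\pi) - 1$.

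The crux is the local inequality $\sigma_K \leq \tfrac{b}{b+1}(\tau_K - 1)$. I would prove it by contracting each part of $\pi|_{V(K)}$ to a single node, obtaining a tree on $p_K$ nodes and $p_K - 1$ edges. Since every Steiner vertex has degree at least $3$ in $K$ (Section \ref{sec:assumptions}), a Steiner-only part containing $k_P \geq 1$ Steiner vertices emits at least $3k_P - 2(k_P - 1) = k_P + 2$ boundary edges in $K$, i.e.\ has degree $\geq k_P + 2 \geq 3$ in the contracted tree. Summing degrees and using that every terminal-containing node has degree $\geq 1$ (once $p_K \geq 2$) yields $\tau_K \geq \sigma_K + 2$ whenever $\sigma_K \geq 1$; the case $\sigma_K = 0$ is trivial since $\tau_K \geq 1$. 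Combined with the $b$-quasi-bipartite bound $\sigma_K \leq s_K \leq b$, a short case split on $\tau_K \leq b+1$ versus $\tau_K \geq b+2$ confirms the local inequality.

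Summing the local bound over $K$ and invoking the telescope from paragraph two gives $\sigma \leq \tfrac{b}{b+1}(\rs(\pi) - 1)$. The main obstacle is picking the right local invariant: the factor $\tfrac{b+1}{2b+1}$ in the statement looks unmotivated, but it materializes exactly when the $+2$ slack in $\tau_K \geq \sigma_K + 2$ meshes with the threshold $\tau_K = b+2$ forced by $\sigma_K \leq b$, and when $\tau_K - 1$ telescopes cleanly to $\rs(\pi) - 1$ through the double-counting of $\sum_K |V(K)|$.
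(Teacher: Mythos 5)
Your proposal is correct, and it reaches the target inequality by a genuinely different decomposition than the paper's. The paper contracts every part of $\pi$ into a pseudonode, obtaining a tree $T'$, and then decomposes $T'$ into \emph{its own} full components $K'$: each such $K'$ with $s$ Steiner pseudonodes satisfies $|E(K')|\geq 2s+1\geq\frac{2b+1}{b}s$ (degree $\geq 3$, at most $s-1$ Steiner--Steiner edges), and these local bounds sum directly to $|E(T')|\geq\frac{2b+1}{b}\cdot\#\{\text{Steiner pseudonodes}\}$, i.e.\ $r(\pi)-1\geq\frac{2b+1}{b}(r(\pi)-\rs(\pi))$, with no extra bookkeeping because the $K'$ partition $E(T')$. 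You instead keep the full components $K$ of $T^*$, restrict $\pi$ to each $V(K)$, prove the local bound $\sigma_K\leq\frac{b}{b+1}(\tau_K-1)$ by the analogous degree count, and then need the double-counting identity $\sum_K(\tau_K-1)=\rs(\pi)-1$ to glue the pieces together. Both routes rest on the same two facts ($y_\pi>0$ makes parts into $T^*$-subtrees; Steiner degree $\geq 3$ plus $b$-quasi-bipartiteness), so the buy is mostly stylistic: the paper's contraction-first view makes the summation automatic, while your local-restriction view isolates the per-component inequality $\sigma_K\leq\frac{b}{b+1}(\tau_K-1)$ at the cost of an extra telescoping step. One sentence worth adding: your claim $\sigma=\sum_K\sigma_K$ as stated only clearly gives $\sigma\leq\sum_K\sigma_K$; for the equality that your telescope uses, you also need that any Steiner-only piece of $\pi|_{V(K)}$ must come from a part of $\pi$ that is entirely contained in $V(K)$ and itself Steiner-only --- this holds because a part of $\pi$ is a subtree of $T^*$, and such a subtree can only exit $K$ through a terminal of $K$, which would put a terminal into the piece. (Strictly speaking only the inequality $\sigma\leq\sum_K\sigma_K$ is needed for your final bound, but then the telescope becomes an inequality too, so it is cleaner to state the equality and justify it.)
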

\begin{proof}
For each part $\pi_i$ of $\pi$, let us identify all of the nodes of
$\pi_i$ into a single pseudonode $v_i$. We may assume by Theorem
\ref{thm:mst}
that each $T^*[\pi_i]$ is connected, hence this identification
process yields a tree $T'$. Let us say that $v_i$ is \emph{Steiner}
if and only if all nodes of $\pi_i$ are Steiner. Note that $T'$ has
$r(\pi)$ pseudonodes and $r(\pi)-\rs(\pi)$ of these pseudonodes are
Steiner. The full components of $T'$ are defined analogously to the
full components of a Steiner tree.

Consider any full component $K'$ of $T'$ and let $K'$ contain
exactly $s$ Steiner pseudonodes. It is straightforward to see that
$s \leq b$. Each Steiner pseudonode in $K'$ has degree at least 3 by
Assumptions A1 and A2, and at most $s-1$ edges of $K'$ join Steiner
vertices to other Steiner vertices. Hence $K'$ has at least
$3s-(s-1) = 2s+1$ edges, and so
$$|E(K')| \geq \frac{2s+1}{s}\cdot s
\geq \frac{2b+1}{b} \cdot s.
$$

Now summing over all full components $K'$, we obtain
$$|E(T')| \geq \frac{2b+1}{b} \cdot \#\{\textrm{Steiner pseudonodes of }T'\}.$$
But $|E(T')| = r(\pi)-1$ and $T'$ has $r(\pi)-\rs(\pi)$ Steiner
pseudonodes, therefore
$$r(\pi)-1 \geq \frac{2b+1}{b} ((r(\pi)-1)-(\rs(\pi)-1)) \quad \Rightarrow \quad \frac{2b+1}{b}(\rs(\pi)-1) \geq \frac{b+1}{b}(r(\pi)-1).$$
This proves what we wanted to show.
\end{proof}

It follows that the objective value of $y$ in \doobsp is
$$ \sum_{\pi \in \sppart[\ess]} (\rs(\pi) -1)y_{\pi}
   \geq \sum_{\pi \in \sppart[\ess]} \frac{b+1}{2b+1}(\rs(\pi) -1)y_{\pi}
   = \frac{b+1}{2b+1}c(T^*)$$
and since $T^*$ is an optimum integer solution of \poob, it follows
that the integrality gap of \poobsp is at most $\frac{b+1}{2b+1}$.
Then, finally, by applying Lemma \ref{lemma:gap++} to \newLPsp and
\poobsp we obtain Theorem \ref{thm:gap}.
\end{proof}

\section{Proof of Lemma \ref{lem:tpcost}}\label{sec:tpcost}

In this section we present a proof of Lemma \ref{lem:tpcost}. The
methodology follows that proposed by Gr\"opl et al.~\cite{GH+01b}.
In fact, many of the proofs below essentially correspond to those
presented in \cite{GH+01b} with two exceptions: we correct a small
error near the end, and we present a new proof of the ubiquitous
\emph{contraction lemma}.


We remind the reader of our standing assumption that $\ess \supseteq
\tbinom{R}{2}$. We first relate the cost of a minimum-cost spanning
tree of $\IS{\ess}$ for some set $\ess$ of full components to the
(potential) lower-bound $\smst(\ISC{\ess})$ on $\opt_r$ that it
provides. For ease of presentation in the analysis, we will assume
from now on that the costs of all edges in $E$ are pairwise
different. This assumption is easily seen to be w.l.o.g. (e.g., one
could define an order on the edges in $E$ and use it to break ties).
We omit the proof of the following easy fact.

\begin{fact} \label{factage}
If $T$ is a minimum-cost spanning tree of $\ess$ then $\loss(T) =
\loss(\ess)$.
\end{fact}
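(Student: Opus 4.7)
My plan is to show that the (unique) optimum loss set $\LOSS(\ess)$ is actually contained in $E(T)$; once this is established, both inequalities fall out of the definitions.

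For the easy direction $\loss(\ess) \leq \loss(T)$: since $T \subseteq \ess$ with the same vertex set, any loss set of $T$ is itself a loss candidate for $\ess$. For the opposite direction $\loss(T) \leq \loss(\ess)$, Fact \ref{fact:modloss} gives $\LOSS(\ess) = \bigcup_{K \in \ess} \LOSS(K)$, so it suffices to prove $\LOSS(K) \subseteq E(T)$ for every $K \in \ess$; then $\LOSS(\ess)$ is itself a subset of $E(T)$ that certifies a valid loss of $T$, giving $\loss(T) \leq c(\LOSS(\ess)) = \loss(\ess)$.

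To prove $\LOSS(K) \subseteq E(T)$, my plan is to apply the MST cut property in two places. First, observe that $\LOSS(K)$ is precisely the MST of the contracted multigraph $K/R$ obtained by identifying all terminals of $K$ into one super-vertex $t^*$, because the condition ``every connected component of $(V(K), F)$ contains a terminal'' becomes ``$F$ spans $K/R$'' after contraction (loops discarded), and the cheapest spanning subgraph of a connected multigraph is a spanning tree. Now pick $e \in \LOSS(K)$; by the MST cut property in $K/R$, combined with the paper's standing assumption of pairwise distinct edge costs, there is a cut $(A, B)$ of $V(K/R)$ for which $e$ is the unique minimum-cost crossing edge. Without loss of generality $t^* \in B$, so $A$ consists purely of Steiner vertices of $K$. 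Because full components are internally vertex-disjoint, every $\ess$-neighbor of every vertex in $A$ lies in $V(K)$; consequently the cut $(A, V(\ess) \bs A)$ of $\ess$ is crossed by exactly the same set of edges as the $K/R$-cut $(A, B)$. Hence $e$ remains the unique minimum-cost edge crossing the corresponding cut in $\ess$, and applying the cut property to the MST $T$ of $\ess$ yields $e \in E(T)$.

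The main point of care is ensuring that the cut $(A, B)$ of $K/R$ transfers to a cut of $\ess$ with the same crossing-edge set; this is precisely where the internal vertex-disjointness of full components does the work, since it confines every $A$-incident $\ess$-edge to $E(K)$.
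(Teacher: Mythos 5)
The paper declares this an ``easy fact'' and explicitly omits its proof, so there is no in-paper argument to compare against. Your proof is correct and supplies a valid argument: the easy direction $\loss(\ess) \le \loss(T)$ follows because $V(T)=V(\ess)$ and $E(T)\subseteq E(\ess)$; for the converse you correctly identify $\LOSS(K)$ with the unique MST of the contracted graph $K/R$, use the cut obtained by deleting $e$ from that MST, and then transfer this cut from $K/R$ to all of $\ess$ using the fact that a Steiner vertex of $K$ has all of its $\ess$-incident edges inside $K$ (internal vertex-disjointness of full components). That transfer is indeed the crux, and your treatment of it is right. The uniqueness claims you invoke (unique loss set, unique min-cost crossing edge) rely on the paper's standing assumption in Section~\ref{sec:tpcost} that all edge costs are pairwise distinct, which is in force when Fact~\ref{factage} is stated, so that is fine; strictly speaking, to conclude $\loss(T)\le\loss(\ess)$ you only need the existence of \emph{some} minimum-cost loss set of $\ess$ lying inside $E(T)$, not its uniqueness, so the word ``unique'' in your first sentence is harmless but unnecessary.
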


\begin{lemma}\label{lem:smst}
For any set $\ess \subseteq \fC_r$ of full components,
$$ \mst(\ISC{\ess}) = \smst(\ISC{\ess}) + \loss(\ess). $$
\end{lemma}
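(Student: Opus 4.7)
The plan is to exploit the primal--dual structure of Kruskal's algorithm on $\IS{\ess}$ and then reduce the harder direction to a minimum-spanning-tree computation on a graph in which all terminals are identified.

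First I would apply Theorem \ref{thm:mst} to $\IS{\ess}$: letting $y$ denote the dual solution produced by $\MST(\IS{\ess})$, one has $\mst(\IS{\ess}) = \sum_{\pi} (r(\pi)-1)\, y_\pi$, and combining with the definition of $\smst$ from \eqref{eq:smst} gives
\[
\mst(\IS{\ess}) - \smst(\IS{\ess}) \;=\; \sum_\pi \bigl(r(\pi)-\rs(\pi)\bigr)\, y_\pi .
\]
The integrand $r(\pi)-\rs(\pi)$ counts precisely the parts of $\pi$ that contain no terminal, and Kruskal gives $y_{\pi_\tau}$ the length of the time-interval during which $\pi_\tau$ is the current partition, so this sum equals $\int_0^{\tau^*} n_S(\tau)\,d\tau$, where $n_S(\tau)$ is the number of Steiner-only components of the Kruskal partition at time $\tau$.

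Next I would identify the integral combinatorially. Let $E^*\subseteq T$ be the set of Kruskal edges whose insertion strictly decreases $n_S$---equivalently, those that merge a pair of components of which at least one is Steiner-only. Since $n_S$ starts at $|V(\ess)\setminus R|$, ends at $0$, drops by exactly $1$ at each event time $c_e$ with $e\in E^*$, and is unchanged at other events, an area-under-a-step-function calculation gives
\[
\int_0^{\tau^*} n_S(\tau)\,d\tau \;=\; \sum_{e \in E^*} c_e \;=\; c(E^*).
\]

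The heart of the proof is then the identity $c(E^*) = \loss(\ess)$. The bound $\loss(\ess)\le c(E^*)$ is easy: I would check that $E^*$ is itself a valid loss set, because for every Steiner vertex $v$ the Kruskal edges added to $v$'s component prior to its first absorption into a terminal-containing component are all, by definition, in $E^*$, so $v$ reaches a terminal via $E^*$. For the reverse bound $c(E^*)\le\loss(\ess)$---the main obstacle---I would use a contraction argument. Let $G'$ be obtained from $\IS{\ess}$ by identifying all terminals into a single super-vertex $T^*$. Running Kruskal on $G'$ in the same cost order as on $\IS{\ess}$, each $T$-edge that was a terminal--terminal merge becomes a self-loop and is skipped, while each $T$-edge that was a Steiner--terminal or Steiner--Steiner merge is still added; non-$T$ edges are cycles in both runs. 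Hence Kruskal on $G'$ produces exactly $E^*$, and so $\mst(G')=c(E^*)$. Conversely, for any loss set $L$ every Steiner vertex is $L$-connected to some terminal in $\IS{\ess}$, so after discarding self-loops $L$ induces a spanning connected subgraph of $G'$, yielding $c(L)\ge \mst(G')=c(E^*)$. Minimizing over $L$ gives $\loss(\ess)\ge c(E^*)$.

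Combining the two equalities yields $\mst(\IS{\ess})-\smst(\IS{\ess}) = c(E^*) = \loss(\ess)$, which is the claim of the lemma. The only subtle point is the minimality $c(E^*)\le\loss(\ess)$, and the contraction-to-$G'$ viewpoint reduces it to the standard fact that every spanning connected subgraph has cost at least the MST.
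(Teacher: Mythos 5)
Your proposal reaches the same identity but by a genuinely different route. Both proofs begin identically, expressing $\mst(\ess)-\smst(\ess)$ as $\int_0^{\tau^*}\bigl(r(\pi_\tau)-\rs(\pi_\tau)\bigr)\,d\tau$. From there the paper works directly with $\LOSS(T)$: it introduces the partition $\bar\pi_\tau$ induced by the components of $G_\tau\cup\LOSS(T)$ and proves (Claim~\ref{claim:rank}, via a maximum-cost-edge argument on tree paths between terminals) that $r(\bar\pi_\tau)=\rs(\pi_\tau)$, whence the loss equality falls out. You instead realize the difference concretely as the cost of the set $E^*$ of Kruskal edges that absorb a Steiner-only component, via the area-under-a-step-function identity, and then recognize $E^*$ as the MST of the graph in which $R$ is contracted to one supernode. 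This is an attractive reformulation, and the contraction viewpoint actually gives both inequalities between $c(E^*)$ and $\loss(\ess)$ at once: once $E^*$ spans the contracted graph it is automatically a valid loss set, which makes your separate ``easy direction'' argument unnecessary (as stated, that argument has a small gap, since the far endpoint of the absorption edge need not be an $E^*$-path away from a terminal). One clarification is warranted in the contraction step: a $T$-edge merging two terminal-containing components becomes a self-loop in $G'$ only if both of its endpoints are terminals; in general it is a non-loop edge whose endpoints are already joined through the supernode. Your claim that Kruskal on $G'$ produces exactly $E^*$ is nonetheless correct, and the clean justification is the invariant that at each time $\tau$ the Kruskal partition on $G'$ equals $\pi_\tau$ with all terminal-containing parts merged into the supernode's part; this follows from a short case analysis on the processed edge and is worth spelling out.
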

\begin{proof}
We use the notation from Section \ref{sec:mst}: $\tau^*$ is the
finishing time of Kruskal's algorithm, $G_\tau = (V, E_\tau)$ is the
forest maintained at time $\tau$, and $\pi_\tau$ is the partition
induced by the connected components of $G_\tau$. Let $(T, y)$ denote
the tree-dual pair returned by $\MST$.


From Theorem \ref{thm:mst} we know that there exists a feasible dual
solution $y$ to \eqref{d-sp} for graph $\IS{\ess}$ such that
$$
c(T) = \sum_{\pi \in \sppart[\ess]} (r(\pi)-1) y_{\pi} =
\int_0^{\tau^*} (r(\pi_{\tau})-1)d\tau.
$$
In the
following let $\calR_{\tau}$ be the set of those connected
components of $E_{\tau}$ that contain terminal vertices.

\begin{claim}\label{claim:rank}
  For all $0 \leq \tau \leq \tau^*$, each connected component of $E_{\tau}
  \cup \LOSS(T)$ contains exactly one connected component of $\calR_{\tau}$.
\end{claim}
\begin{proof}
Let $u$ and $v$ be terminals in distinct connected components of
$G_\tau$ and let $P_{uv}$ be the unique $u,v$-path in $T$.
Assume for the sake of contradiction that $P_{uv}$ is contained
in $E_{\tau} \cup \LOSS(T)$.

Let $\bar{e}$ be the unique edge of maximum cost on path $P_{uv}$.
Recall from Section \ref{sec:mst} that Kruskal's algorithm adds
edges to the partial spanning tree in order of non-decreasing cost.
Thus, edge $\bar{e}$ is added last among all edges on $P_{uv}$. As
$u$ and $v$ are in different connected components of $G_{\tau}$, it
therefore follows that $\bar{e}\not\in E_{\tau}$. The loss of $T$ is
a minimum-cost forest in $T$ that connects all Steiner vertices to
terminals. Thus, the unique edge of maximum cost on $P_{uv}$ cannot
be in $\LOSS(T)$.

It follows that $\bar{e} \not\in E_{\tau}\cup \LOSS(T)$ and this
contradicts our assumption that $P_{uv}
\subseteq E_{\tau} \cup \LOSS(T)$.
\end{proof}

For each time $0 \leq \tau\leq \tau^*$, define $\bar{\pi}_{\tau}$ as
the Steiner partition corresponding to the connected components of
$G_\tau \cup \LOSS(T)$.  From Theorem \ref{thm:mst} we know that
$$ \loss(T) = \sum_{e \in \LOSS(T)} c_e =
\sum_{e \in \LOSS(T)} \sum_{\pi: e\in E_{\pi}} y_{\pi} =
\int_0^{\tau^*} |E_{\pi_{\tau}} \cap \LOSS(T)| d\tau $$ where, as
before, $E_{\pi_{\tau}}$ is the set of edges in $E$ that have
endpoints in different parts of $\pi_{\tau}$.

The number of edges in $|E_{\pi_{\tau}} \cap \LOSS(T)|$ is exactly
the rank-difference between $\pi_{\tau}$ and $\bar{\pi}_{\tau}$ and
hence
$$ \loss(T) = \int_0^{\tau^*} (r(\pi_{\tau})-r(\bar{\pi}_{\tau})) d\tau. $$
Claim \ref{claim:rank} implies that
$r(\bar{\pi}_{\tau})=\rs(\pi_{\tau})$ for all $0 \leq \tau \leq
\tau^*$ and hence \begin{equation*} \smst(\ess) + \loss(T) =
\int_0^{\tau^*} (\rs(\pi)-1) d\tau +
    \int_0^{\tau^*} (r(\pi_{\tau})-\rs(\pi_{\tau})) d\tau
    = \int_0^{\tau^*} (r(\pi)-1) d\tau = c(T).
    \end{equation*}
Applying Fact \ref{factage} and the equality $c(T)=\mst(\ess)$, we
are done.
\end{proof}

We obtain the following immediate corollary:

\begin{corollary}\label{cor:fone-a}
In iteration $i$ of Algorithm \ref{alg}, adding full component $K
\in \fC_r$ to $\ess$ reduces the cost of $\mst(\ISC{\ess})$ if and
only if $f_i(K)<1$.
\end{corollary}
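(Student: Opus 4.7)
The plan is to collapse the statement to a one-line algebraic identity obtained by combining Lemma \ref{lem:smst} with Fact \ref{fact:modloss}. Applying Lemma \ref{lem:smst} to both $\ess^i$ and $\ess^i\cup\{K\}$ gives
$$\mst(\ISC{\ess^i})-\mst(\ISC{\ess^i\cup\{K\}})=\bigl(\smst(\ISC{\ess^i})-\smst(\ISC{\ess^i\cup\{K\}})\bigr)+\loss(\ess^i)-\loss(\ess^i\cup\{K\}).$$
Because distinct full components of $\K$ meet only at terminals, Fact \ref{fact:modloss} reduces the last two terms to $-\loss(K)$. Hence $\mst$ strictly decreases after adding $K$ precisely when the drop in $\smst$ strictly exceeds $\loss(K)$, which by the definition \eqref{def:fi} of $f_i$ is exactly the inequality $f_i(K)<1$.

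The only subtlety is to check that the denominator $\smst(\ISC{\ess^i})-\smst(\ISC{\ess^i\cup\{K\}})$ appearing in \eqref{def:fi} is legitimately non-negative, so that the ratio rearrangement is valid. The identity just derived shows that this denominator equals $\bigl[\mst(\ISC{\ess^i})-\mst(\ISC{\ess^i\cup\{K\}})\bigr]+\loss(K)$, and both summands are non-negative: enlarging the admissible edge set can only lower the cost of a minimum spanning tree, and losses are non-negative by definition. The only remaining (and very minor) obstacle is the degenerate case where this denominator vanishes; then both the $\mst$-drop and $\loss(K)$ equal zero, so neither the left side (strict decrease) nor the right side ($f_i(K)<1$, with the ratio interpreted as $+\infty$ or undefined) holds, and the equivalence is preserved.
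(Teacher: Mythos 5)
Your proof is correct and follows essentially the same route as the paper: apply Lemma \ref{lem:smst} to both $\ess^i$ and $\ess^i\cup\{K\}$, invoke Fact \ref{fact:modloss} to rewrite $\loss(\ess^i)-\loss(\ess^i\cup\{K\})=-\loss(K)$, and then translate the resulting identity into the claim via the definition \eqref{def:fi} of $f_i$. The only difference is that you take explicit care over the sign of the denominator $\smst(\ISC{\ess^i})-\smst(\ISC{\ess^i\cup\{K\}})$ and the degenerate case where it vanishes; the paper's proof silently treats the ratio rearrangement as legitimate without this check, so your added precision is a minor but genuine tightening rather than a different approach.
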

\begin{proof}
By applying Lemma \ref{lem:smst} we see that 
$$
\mst(\ISC{\ess^i}) - \mst(\ISC{\ess^i \cup \{K\}}) 
= \smst(\ISC{\ess^i}) + \loss(\ess^i)
     - \smst(\ISC{\ess^i \cup \{K\}})- \loss(\ess^i \cup\{K\}).$$
Whereas the left-hand side is positive iff adding $K$ to $\ess^i$
causes a reduction in $\mst$, the right-hand side is positive iff
$f_i(K)<1$, due to the definition of $f_i$.
\end{proof}


Using Lemma \ref{lem:feas} and Corollary \ref{cor:fone-a}, we obtain
the following.

\begin{corollary}\label{cor:fone}
For all $1 \leq i \leq p$, $f_i(K^i)<1$.
\end{corollary}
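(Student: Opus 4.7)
The statement is essentially a two-step combination of results already proved earlier in the section, so my plan is quite short. The key observation is that in any iteration $i$ with $1 \leq i \leq p$, the full component $K^i$ was chosen in step 6 of Algorithm \ref{alg}, and by the algorithm's definition such a $K^i$ must be violated by the dual $y^i$ returned by $\MST(\ISC{\ess^i})$. (If no violated component existed, then $y^i$ would have already been feasible for $\REF{\textrm{D}_{ST}^{\ess^{i}}}$ and iteration $i$ would not have added any component.)

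Given that $K^i$ is violated by $y^i$, I would invoke Lemma \ref{lem:feas} directly on the pair $(T^i, y^i) = \MST(\ISC{\ess^i})$ with $K = K^i$. The conclusion of that lemma is precisely
$$\mst(\ISC{\ess^i \cup \{K^i\}}) < c(T^i) = \mst(\ISC{\ess^i}),$$
i.e., adding $K^i$ to $\ess^i$ strictly decreases the cost of the minimum spanning tree.

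Finally, I would apply Corollary \ref{cor:fone-a}, which states exactly that the cost of the minimum spanning tree strictly decreases when $K$ is added to $\ess^i$ if and only if $f_i(K) < 1$. Applied with $K = K^i$, this yields $f_i(K^i) < 1$, finishing the proof. No real obstacle is anticipated; the whole argument is a direct chain of implications combining the violation-implies-MST-improvement fact (Lemma \ref{lem:feas}) with the MST-improvement-iff-$f_i<1$ characterization (Corollary \ref{cor:fone-a}). The substantive work is already contained in those two earlier results.
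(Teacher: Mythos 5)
Your proposal is correct and is exactly the argument the paper gives: the paper states the corollary immediately after the sentence ``Using Lemma \ref{lem:feas} and Corollary \ref{cor:fone-a}, we obtain the following,'' which is precisely your two-step chain (violation $\Rightarrow$ MST cost drops by Lemma \ref{lem:feas}; MST cost drops $\Leftrightarrow f_i(K^i)<1$ by Corollary \ref{cor:fone-a}). You have simply spelled out what the paper leaves implicit.
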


Fix an optimum $r$-Steiner tree $T^*$. \ignore{We would like to
assume that none of the full components added by Algorithm \ref{alg}
belong to $\fC_r(T^*)$.

To justify this assumption, for each $K \in \K$ let us create a
clone $K_{cl}$ of it, and add $K_{cl}$ to $\K$. We may assume that
$T^*$ uses \emph{only} cloned full components, and we claim that the
tree returned by \rz\ can be built using \emph{no} cloned full
components.

Indeed, the only problem would be if \rz\ tried to set $K^i$ to the
clone of $K^j$ for some $i$ and $j;$ but using Lemma \ref{lem:feas}
it is not hard to show that $f_j(K^i_{cl}) \geq 1$ for all $j>i$.}
The next two lemmas give bounds that are needed to analyze $\rz$'s
greedy strategy. Informally, the first says that $\smst$ is
non-increasing, while the second says that $\smst$ is submodular.

\begin{lemma}\label{lemma:smstdec}
If $\ess \subseteq \ess' \subseteq \K$, then $\smst(\ISC{\ess'})
\leq \smst(\ISC{\ess})$.
\end{lemma}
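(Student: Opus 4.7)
The plan is to reduce to the case $|\ess' \setminus \ess|=1$ by induction, then convert the desired inequality into one involving only $\mst$ and $\loss$ using Lemma \ref{lem:smst}, and finally verify it by exhibiting an explicit spanning tree.

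So I would assume $\ess' = \ess \cup \{K\}$ for some $K \in \fC_r \setminus \ess$. Applying Lemma \ref{lem:smst} to both $\ess$ and $\ess'$, together with Fact \ref{fact:modloss} (which gives $\loss(\ess\cup\{K\}) = \loss(\ess)+\loss(K)$), the target inequality $\smst(\ISC{\ess'}) \leq \smst(\ISC{\ess})$ becomes equivalent to
$$\mst(\ISC{\ess \cup \{K\}}) \;\leq\; \mst(\ISC{\ess}) + \loss(K).$$
To establish this, I would construct a spanning subgraph of $\ISC{\ess \cup \{K\}}$ by taking the edges of $\MST(\ISC{\ess})$ together with the edges of $\LOSS(K)$. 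Because distinct full components are internally vertex disjoint, every Steiner vertex of $K$ lies in $V(K) \setminus V(\ess)$, whereas every terminal of $K$ lies in $R \subseteq V(\ess)$ (using the standing assumption $\tbinom{R}{2} \subseteq \ess$). By definition of the loss, every connected component of $(V(K), \LOSS(K))$ contains a terminal of $K$, so $\LOSS(K)$ connects every Steiner vertex of $K$ to some vertex of $V(\ess)$. Since $\MST(\ISC{\ess})$ spans $V(\ess)$, the union is connected and spans $V(\ess) \cup V(K) = V(\ess \cup \{K\})$. Its cost is $\mst(\ISC{\ess}) + \loss(K)$, so the minimum spanning tree of $\ISC{\ess \cup \{K\}}$ costs at most this sum, which is exactly the inequality needed.

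I do not expect any real obstacle here: once one invokes Lemma \ref{lem:smst} to eliminate $\smst$, the content reduces to an almost trivial ``attach the loss edges'' construction, and the only technical points are the containment $V(K) \cap R \subseteq V(\ess)$ and the internal-vertex-disjointness of distinct full components, both of which are standing assumptions.
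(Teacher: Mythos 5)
Your proof is correct and takes essentially the same approach as the paper: use Lemma \ref{lem:smst} and Fact \ref{fact:modloss} to reduce the $\smst$ inequality to showing $\mst(\ISC{\ess'}) \leq \mst(\ISC{\ess}) + \loss(\ess' \bs \ess)$, and then observe that $\MST(\ISC{\ess}) \cup \LOSS(\ess' \bs \ess)$ is a spanning connected subgraph of $\ess'$. The only cosmetic difference is that you reduce by induction to the one-component case $\ess' = \ess \cup \{K\}$, whereas the paper handles general $\ess' \supseteq \ess$ in one step.
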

\begin{proof}
Using Lemma \ref{lem:smst} and Fact \ref{fact:modloss} we see
$$\smst(\ISC{\ess}) - \smst(\ISC{\ess'})  = \mst(\ISC{\ess})
+ \loss(\ess' \bs \ess) - \mst(\ISC{\ess'}).\label{smstdec:1}
$$ However, the right hand side of the above equation is
non-negative, as $ \MST(\ISC{\ess}) \cup\LOSS(\ess' \bs \ess)$ is a
spanning tree of $\ess'$. Lemma \ref{lemma:smstdec} then follows.
\end{proof}

\begin{lemma}[Contraction Lemma]\label{lem:subm}
Let $\R^0, \R^1, \R^2 \subset \K$ be disjoint collections of full
components with $\tbinom{R}{2} \subseteq \R^0$. Then
$$ \smst(\ISC{\R^0}) - \smst(\ISC{\R^0 \cup \R^2}) \geq
   \smst(\ISC{\R^0 \cup \R^1}) - \smst(\ISC{\R^0 \cup \R^1 \cup \R^2}). $$
\end{lemma}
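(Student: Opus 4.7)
The plan is to reduce the inequality to the supermodularity of $\mst$ on sets of full components, and then to establish the latter via a Kruskal level-set formula combined with the submodularity of the graphic matroid rank.

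For the reduction, Lemma \ref{lem:smst} gives $\smst(\IS{X}) = \mst(\IS{X}) - \loss(X)$ for each $X \supseteq \tbinom{R}{2}$ appearing in the statement. Pairwise disjointness of $\R^0, \R^1, \R^2$ together with Fact \ref{fact:modloss} yields $\loss(\R^0) + \loss(\R^0 \cup \R^1 \cup \R^2) = \loss(\R^0 \cup \R^1) + \loss(\R^0 \cup \R^2)$, so the $\loss$ terms cancel and it suffices to prove the supermodular inequality
$$\mst(\IS{\R^0}) + \mst(\IS{\R^0 \cup \R^1 \cup \R^2}) \;\geq\; \mst(\IS{\R^0 \cup \R^1}) + \mst(\IS{\R^0 \cup \R^2}). \qquad (\star)$$

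To prove $(\star)$, I use the level-set identity derived from Kruskal's algorithm (available from the time-integrated reading of Theorem \ref{thm:mst}): for any $X \supseteq \tbinom{R}{2}$,
$$\mst(\IS{X}) = \int_0^\infty \bigl( \mathrm{cc}(V(X), E(X)_{\leq t}) - 1 \bigr) \, dt,$$
where $E(X)_{\leq t}$ is the set of edges of $\IS{X}$ of cost at most $t$ and $\mathrm{cc}$ counts connected components. I would normalize to the common vertex set $V := V(\R^0 \cup \R^1 \cup \R^2)$ via $\mathrm{cc}(V, E) = \mathrm{cc}(V(X), E) + (|V| - |V(X)|)$ for $E \subseteq E(X)$, which simply reflects that vertices outside $V(X)$ appear as isolates in $(V, E)$. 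Combining the four resulting integrals in $(\star)$, the integrand at time $t$ separates as $D(t) + C$, where $D(t)$ is the pointwise $\mathrm{cc}$-difference on $V$ and
$$C := |V(\R^0)| + |V(\R^0 \cup \R^1 \cup \R^2)| - |V(\R^0 \cup \R^1)| - |V(\R^0 \cup \R^2)|$$
is a time-independent correction. I would then show $C = 0$ using that full components are internally vertex-disjoint and share only terminals, while $R \subseteq V(\R^0)$ because $\tbinom{R}{2} \subseteq \R^0$; this forces $V(\R^1) \cap V(\R^2) \subseteq R \subseteq V(\R^0)$, so $(V(\R^0) \cup V(\R^1)) \cap (V(\R^0) \cup V(\R^2)) = V(\R^0)$ and inclusion-exclusion gives $C = 0$.

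It remains to verify that $D(t) \geq 0$ pointwise. Setting $A := E(\R^0 \cup \R^1)_{\leq t}$ and $B := E(\R^0 \cup \R^2)_{\leq t}$, the disjointness of $\R^1$ and $\R^2$ gives $A \cap B = E(\R^0)_{\leq t}$ and $A \cup B = E(\R^0 \cup \R^1 \cup \R^2)_{\leq t}$, whence $D(t) = \mathrm{cc}(V, A \cap B) + \mathrm{cc}(V, A \cup B) - \mathrm{cc}(V, A) - \mathrm{cc}(V, B)$, which is nonnegative because $\mathrm{cc}(V, \cdot) = |V| - \mathrm{rank}(\cdot)$ and the graphic-matroid rank function is submodular. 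The main obstacle I anticipate is the vertex-set bookkeeping: the correction $C$ must cancel \emph{exactly}, not merely with a favorable sign, and it is this that requires invoking both the internal disjointness of full components and the hypothesis $\tbinom{R}{2} \subseteq \R^0$.
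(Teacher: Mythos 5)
Your proposal is correct and takes essentially the same route as the paper: reduce to supermodularity of $\mst$ via Lemma~\ref{lem:smst} and Fact~\ref{fact:modloss}, express $\mst$ via the Kruskal level-set integral, and apply submodularity of the graphic-matroid rank pointwise in $t$. Your explicit vertex-set normalization and the verification that the correction term $C$ vanishes is the same bookkeeping the paper handles in Equation~\eqref{subm:eq} by observing both sides equal the negative of the number of Steiner vertices internal to $\R^2$.
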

\begin{proof}
The statement to be proved is equivalent to
\begin{equation} \mst(\ISC{\R^0}) - \mst(\ISC{\R^0 \cup \R^2}) \geq
   \mst(\ISC{\R^0 \cup \R^1}) - \mst(\ISC{\R^0 \cup \R^1 \cup \R^2}), \label{eq:reph}
   \end{equation}
due to Lemma \ref{lem:smst} and Fact \ref{fact:modloss}. For a
graph $H$, define the \emph{rank} $r(H)$ of $H$ as the number of
edges in a maximal forest of $H$:
$$ r(H) = |V(H)| - \mbox{\# connected components of } H. $$
For a graph $H$, let $H_{\leq x}$ denote the subgraph of $H$
consisting of those edges of weight at most $x$. By considering
Kruskal's algorithm, for any graph $H$ having nonnegative edge
costs, we see that
\begin{equation}\label{eq:intmst}
\mst(H) = \sum_{i=1}^{r(H)} \min \{x \mid r(H_{\leq x}) \geq i\}
=\int_{0}^{\infty} \bigl(r(H)-r(H_{\leq x})\bigr)\,dx.
\end{equation}
Note that the integral is proper since the integrand is 0 for $x$
larger than $\max\{c_e:e \in E(H)\}$.

Here is the crux: $r$ is the rank function for a (graphic) matroid
and is therefore submodular over the addition of disjoint edge sets.
Since the $\R^i_{<x}$ are pairwise disjoint, for every $x$, this
submodularity implies that
\begin{equation}
- r\left(\R^0_{\leq x}\right) + r\left(\R^0_{\leq x}\cup\R^2_{\leq
x}\right) \geq - r\left(\R^0_{\leq x}\cup\R^1_{\leq x}\right) +
r\left(\R^0_{\leq x}\cup\R^1_{\leq x}\cup\R^2_{\leq x}\right).
\label{subm:subm}
\end{equation}
Notice also that
\begin{equation}
r(\R^0) - r(\R^0\cup\R^2) =
r(\R^0\cup\R^1) - r(\R^0\cup\R^1\cup\R^2)\label{subm:eq}
\end{equation}
since both sides are equal to the number of Steiner vertices in $\R^2$, times $-1$.

Finally, we add Equation \eqref{subm:subm} to Equation
\eqref{subm:eq} and integrate along $x$. Since $(\R^0 \cup
\R^2)_{\leq x} = \R^0_{<x} \cup \R^2_{\leq x}$ etc.\ we get
\begin{align*}
& \int_0^\infty \left(r(\R^0)- r(\R^0_{\leq x})\right) \,dx -
\int_0^\infty \Bigl(r(\R^0 \cup \R^2)- r\left((\R^0 \cup \R^2)_{\leq x}\right)\Bigr)\,dx  \\
&\geq \int_0^\infty \Bigl(r(\R^0 \cup \R^1)- r\left((\R^0 \cup
\R^1)_{\leq x}\right)\Bigr)\,dx - \int_0^\infty \Bigl(r(\R^0 \cup
\R^1 \cup \R^2)- r\left((\R^0 \cup \R^1 \cup \R^2)_{\leq
x}\right)\Bigr)\,dx.
\end{align*}

But using Equation \eqref{eq:intmst}, this gives precisely Equation
\eqref{eq:reph}.
\end{proof}

We note that the proof of Lemma \ref{lemma:smstdec} easily
generalizes to other matroids. This is a departure from the existing
proofs in \cite{GH+01b} and \cite[Lemma 3.9]{BR94}, and Rizzi's more
specific result \cite[Lemma 2]{Ri03}, although a strong
\emph{exchange property} of matroids is used in the proof of
\cite{BR94}.

We are finally near the end of the analysis, where the Contraction
Lemma comes into play. We can now bound the value $f_i(K^i)$ for all
$0 \leq i \leq p-1$ in terms of the cost of $T^*$'s loss. In the
remainder of the section, let the full components of $T^*$ be
$K^{*,1}, \dotsc, K^{*,q}$, let $\loss^*$ denote $\loss(T^*)$, let
$\smst^i$ denote $\smst(\ISC{\ess^i})$ and let $\smst^*$ denote
$\smst(\ISC{T^*})$.

\begin{lemma}\label{lem:fbound}
For all $0 \leq i \leq p-1$, if $\smst^i - \smst^*
> 0$, then $f_i(K^i) \leq
\loss^*/(\smst^i - \smst^*)$.
\end{lemma}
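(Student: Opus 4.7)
The plan is to execute a standard greedy-averaging argument against the full components of the optimum tree $T^*$. Let $K^{*,1},\ldots,K^{*,q}$ denote the full components of $T^*$, and for each $j$ define
$$\alpha_j := \loss(K^{*,j}) \quad\text{and}\quad \beta_j := \smst(\ISC{\ess^i}) - \smst(\ISC{\ess^i \cup \{K^{*,j}\}}).$$
Fact \ref{fact:modloss} gives $\sum_j \alpha_j = \loss^*$, and Lemma \ref{lemma:smstdec} guarantees each $\beta_j \geq 0$. My first substantive step is to bound $\sum_j \beta_j$ from below. I will apply the Contraction Lemma iteratively, taking $\R^0 = \ess^i$, $\R^1 = \{K^{*,1},\ldots,K^{*,k-1}\}$, and $\R^2 = \{K^{*,k}\}$ (dropping any $K^{*,j}\in \ess^i$, which contribute $\alpha_j=\beta_j=0$ anyway). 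The resulting telescoping sum yields
$$\sum_{j=1}^{q} \beta_j \geq \smst(\ISC{\ess^i}) - \smst(\ISC{\ess^i \cup T^*}),$$
and another application of Lemma \ref{lemma:smstdec} bounds the right-hand side below by $\smst^i - \smst^*$, which is positive by hypothesis.

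Next I run the averaging argument. Since $\sum_j \beta_j > 0$, at least one $\beta_j$ is positive, and for nonnegative $\alpha_j$ and positive $\beta_j$ we have
$$\min_{j:\beta_j>0} \frac{\alpha_j}{\beta_j} \leq \frac{\sum_j \alpha_j}{\sum_j \beta_j} \leq \frac{\loss^*}{\smst^i - \smst^*}.$$
Let $j^*$ achieve the minimum on the left. Then $f_i(K^{*,j^*}) = \alpha_{j^*}/\beta_{j^*} \leq \loss^*/(\smst^i - \smst^*)$.

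It remains to transfer the bound to $f_i(K^i)$. Here the subtlety is that Algorithm \ref{alg} restricts its choice of $K^i$ to \emph{violated} full components, whereas $K^{*,j^*}$ need not be violated. I handle this with a brief case split. If $f_i(K^{*,j^*}) < 1$, then by Corollary \ref{cor:fone-a} the component $K^{*,j^*}$ is violated, and minimality of $K^i$ among violated components gives $f_i(K^i) \leq f_i(K^{*,j^*}) \leq \loss^*/(\smst^i - \smst^*)$. Otherwise $f_i(K^{*,j^*}) \geq 1$, forcing $\loss^*/(\smst^i - \smst^*) \geq 1$, and since $K^i$ itself is violated we have $f_i(K^i) < 1 \leq \loss^*/(\smst^i - \smst^*)$ by Corollary \ref{cor:fone}. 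Either way the desired inequality holds.

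The main obstacle is getting the Contraction Lemma applied correctly to guarantee $\sum_j\beta_j \geq \smst^i - \smst^*$; everything else is bookkeeping (tracking loss modularity, handling $K^{*,j}\in \ess^i$, and resolving the violated-vs.-arbitrary distinction at the end).
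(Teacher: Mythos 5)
Your proof follows the same route as the paper: averaging $f_i$ over the full components $K^{*,1},\ldots,K^{*,q}$ of $T^*$, converting $\sum_j \beta_j$ into a telescoping sum via the Contraction Lemma, and finishing with Lemma~\ref{lemma:smstdec}. The only substantive difference is presentational --- you apply the Contraction Lemma before the mediant inequality rather than inside it --- and that you add an explicit case split to justify the step the paper compresses into ``by the choice of $K^i$, $f_i(K^i) \leq \min_j f_i(K^{*,j})$.'' That caution is warranted, since Algorithm~\ref{alg} minimizes $f_i$ only over \emph{violated} full components. But note that your justification for the first branch is slightly misattributed: Corollary~\ref{cor:fone-a} says $f_i(K) < 1$ if and only if adding $K$ \emph{reduces} $\mst(\ISC{\ess^i})$; it does not say $K$ is violated. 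You would additionally need the converse of Lemma~\ref{lem:feas} (``$\mst$ decreases $\Rightarrow K$ violated''), which the paper never states. The paper's one-line claim implicitly assumes the same fact, so this is a shared, unproved step rather than an error introduced by you --- but you should not cite Corollary~\ref{cor:fone-a} as settling it. Your fallback branch ($f_i(K^{*,j^*}) \geq 1$ forces $\loss^*/(\smst^i-\smst^*) \geq 1 > f_i(K^i)$ via Corollary~\ref{cor:fone}) is clean and needs no such assumption.
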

\begin{proof}
By the choice of $K^i$ in Algorithm \ref{alg}, we have $f_i(K^i)
\leq \min_{j} f_i(K^{*,j})$. A standard fraction averaging
argument 
implies that
\begin{eqnarray}
f_i(K^i) & \leq & \frac{\sum_{j=1}^q \loss(K^{*,j})}
       {\sum_{j=1}^q \bigl(\smst(\ISC{\ess^i}) - \smst(\ISC{\ess^i \cup \{K^{*,j}\}})\bigr)} \notag \\
         & \leq & \frac{\loss^*}
       {\sum_{j=1}^q \bigl(\smst(\ISC{\ess^i \cup \{K^{*,1},\dotsc,K^{*,j-1}\}})
     - \smst(\ISC{\ess^i \cup
     \{K^{*,1},\dotsc,K^{*,j}\}})\bigr)} \label{coffee}
\end{eqnarray}
where the last inequality uses Fact \ref{fact:modloss} and Lemma
\ref{lem:subm}. (Additional care is needed when $T^*$ and $\ess^p$
overlap in some full components, but the above inequalities still
hold.) The denominator of the right-hand side of Equation
\eqref{coffee} is a telescoping sum. Canceling like terms, and using
Lemma \ref{lemma:smstdec} to replace $\smst(\ISC{\ess^i} \cup
\{K^{*,1},\dotsc,K^{*,q}\})$ with $\smst^*$, we are done.
\end{proof}

We can now bound the cost of $T^p$.

\begin{proof}[Proof of Lemma \ref{lem:tpcost}]
We first bound the loss $\loss(T^p)$ of tree $T^p$. Using Fact
\ref{fact:modloss}, \begin{equation}\label{tea} \loss(T^p) =
\sum_{i=0}^{p-1} \loss(K^i) =
                  \sum_{i=0}^{p-1} f_i(K^i) \cdot (\smst^i -
                  \smst^{i+1})\end{equation}
where the last equality uses the definition of $f_i$ from
\eqref{def:fi}. Using Corollary \ref{cor:fone} and Lemma
\ref{lem:fbound}, the right hand side of Equation \eqref{tea} is
bounded as follows:
\begin{equation}\label{tpcost:1}
\sum_{i=0}^{p-1} f_i(K^i) \cdot (\smst^i -
                  \smst^{i+1})\leq \sum_{i=0}^{p-1} \frac{\loss^*}{\max\{\loss^*, \smst^i - \smst^*\}}
   \cdot (\smst^i - \smst^{i+1}).
\end{equation}
The right hand side of Equation \eqref{tpcost:1} can in turn be
bounded from above by the following integral:
\begin{equation}\label{gummy} \sum_{i=0}^{p-1} \frac{\loss^* \cdot
(\smst^i - \smst^{i+1})}{\max\{\loss^* , \smst^i - \smst^*\}}
   \leq \int_{\smst^p}^{\smst^0} \frac{\loss^*}{\max\{\loss^*,x-\smst^*\}}
dx
   = \int_{\smst^p-\smst^*}^{\smst^0-\smst^*} \frac{\loss^*}{\max\{\loss^*,x\}}
   dx.\end{equation}
Notice that $\smst^0=\mst(G[R],c) \geq \opt_r = \loss^* + \smst^*$.
The termination condition in Algorithm \ref{alg} and Lemma
\ref{lem:genpart} imply that $\smst^p \leq \opt_r$. Hence the result
of evaluating the integral in the right-hand side of Equation
\eqref{gummy} is
\begin{equation} \label{crumpet} \loss^* - (\smst^p-\smst^*) + \loss^* \cdot
\int_{\loss^*}^{\smst^0-\smst^*} \frac{1}{x} dx =
   \opt_r - \smst^p + \loss^* \cdot
   \ln\left(\frac{\smst^0-\smst^*}{\loss^*}\right) \end{equation}
where the equality uses Lemma \ref{lem:smst}. Applying Lemma
\ref{lem:smst} two more times, and combining Equations
\eqref{tea}--\eqref{crumpet}, we obtain
\begin{eqnarray*}
 c(T^p) = \smst^p + \loss(T^p) & \leq & \opt_r + \loss^* \cdot
   \ln\left(\frac{\smst^0-\smst^*}{\loss^*}\right) \\
   & = & \opt_r + \loss^* \cdot
   \ln\left(1+ \frac{\smst^0-(\smst^*+\loss^*)}{\loss^*}\right) \\
   & = & \opt_r + \loss^* \cdot
   \ln\left( 1 + \frac{\smst^0-\opt_r}{\loss^*}\right)
\end{eqnarray*}
as wanted.
\end{proof}

{\bf Remark}. Gr\"{o}pl et al.\ essentially prove Lemma
\ref{lem:tpcost} in \cite[Lemma 4.3]{GH+01b} but a minor error lies
in their equation ``(18)." Namely, they assume ``$m_i - m^* > 0$"
which is $\smst^i-\smst^*
> 0$ in our notation.



\begin{thebibliography}{10}

\bibitem{AC04}
A.~Agarwal and M.~Charikar.
\newblock On the advantage of network coding for improving network throughput.
\newblock In {\em Proceedings, IEEE Information Theory Workshop}, 2004.

\bibitem{AKR}
A.~Agrawal, P.~Klein, and R.~Ravi.
\newblock When trees collide: An approximation algorithm for the generalized
  \protect{{S}teiner} problem in networks.
\newblock {\em SIAM J.\ Comput.}, 24:440--456, 1995.

\bibitem{An80}
Y.~P. Aneja.
\newblock An integer linear programming approach to the {S}teiner problem in
  graphs.
\newblock {\em Networks}, 10:167--178, 1980.

\bibitem{BR94}
P.~Berman and V.~Ramaiyer.
\newblock Improved approximations for the {{S}teiner} tree problem.
\newblock {\em J.\ Algorithms}, 17(3):381--408, 1994.

\bibitem{BD97}
A.~Borchers and D.~Du.
\newblock The $k$-{S}teiner ratio in graphs.
\newblock {\em SIAM J.\ Comput.}, 26(3):857--869, 1997.

\bibitem{CC02}
M.~Chleb{\'i}k and J.~Chleb{\'i}kov{\'a}.
\newblock Approximation hardness of the {S}teiner tree problem on graphs.
\newblock In {\em Proceedings, Scandinavian Workshop on Algorithm Theory},
  pages 170--179, 2002.

\bibitem{C89}
S.~Chopra.
\newblock On the spanning tree polyhedron.
\newblock {\em Operations Research Letters}, 8:25--29, 1989.

\bibitem{CR94a}
S.~Chopra and M.~R. Rao.
\newblock The {{S}teiner} tree problem 1: Formulations, compositions, and
  extension of facets.
\newblock {\em Mathematical Programming}, 64:209--229, 1994.

\bibitem{CR94b}
S.~Chopra and M.~R. Rao.
\newblock The {{S}teiner} tree problem 2: Properties and classes of facets.
\newblock {\em Mathematical Programming}, 64:231--246, 1994.

\bibitem{DB+01}
M.~{Didi Biha}, H.~Kerivin, and A.~R. Mahjoub.
\newblock Steiner trees and polyhedra.
\newblock {\em Discrete Applied Mathematics}, 112(1-3):101--120, 2001.

\bibitem{DW72}
S.~E. Dreyfus and R.~A. Wagner.
\newblock The {{S}teiner} problem in graphs.
\newblock {\em Networks}, 1:195--207, 1972.

\bibitem{E67}
J.~Edmonds.
\newblock Optimum branchings.
\newblock {\em J. Res. Nat. Bur. Standards}, B71:233--240, 1967.

\bibitem{GJ77}
M.~R. Garey and D.~S. Johnson.
\newblock The rectilinear {S}teiner tree problem is {NP} complete.
\newblock {\em SIAM J.\ Appl.\ Math.}, 32:826--834, 1977.

\bibitem{GP68}
E.~N. Gilbert and H.~O. Pollak.
\newblock {{S}teiner} minimal trees.
\newblock {\em SIAM J.\ Appl.\ Math.}, 16(1):1--29, 1968.

\bibitem{Go94}
M.~X. Goemans.
\newblock The {S}teiner tree polytope and related polyhedra.
\newblock {\em Math. Program.}, 63(2):157--182, 1994.

\bibitem{GB93}
M.~X. Goemans and D.~Bertsimas.
\newblock Survivable networks, linear programming relaxations and the
  parsimonious property.
\newblock {\em Math.\ Programming}, 60:145--166, 1993.

\bibitem{GM93}
M.~X. Goemans and Y.~Myung.
\newblock A catalog of {{S}teiner} tree formulations.
\newblock {\em Networks}, 23:19--28, 1993.

\bibitem{GW97}
M.~X. Goemans and D.~P. Williamson.
\newblock The primal-dual method for approximation algorithms and its
  application to network design problems.
\newblock In D.~S. Hochbaum, editor, {\em Approximation Algorithms for NP-hard
  Problems}, chapter~4. PWS, Boston, 1997.

\bibitem{GH+01b}
C.~Gr{\"o}pl, S.~Hougardy, T.~Nierhoff, and H.~J. Pr{\"o}mel.
\newblock Approximation algorithms for the {{S}teiner} tree problem in graphs.
\newblock In X.~Cheng and D.~Du, editors, {\em {{S}teiner} trees in
  industries}, pages 235--279. Kluwer Academic Publishers, Norvell,
  Massachusetts, 2001.

\bibitem{HP99}
S.~Hougardy and H.~J. Pr{\"o}mel.
\newblock A 1.598 approximation algorithm for the {S}teiner problem in graphs.
\newblock In {\em Proceedings, ACM-SIAM Symposium on Discrete Algorithms},
  pages 448--453, 1999.

\bibitem{Ka72}
R.~M. Karp.
\newblock Reducibility among combinatorial problems.
\newblock In {\em Complexity of Computer Computations}, pages 85--103. Plenum
  Press, NY, 1972.

\bibitem{KZ97}
M.~Karpinski and A.~Zelikovsky.
\newblock New approximation algorithms for the {S}teiner tree problems.
\newblock {\em J.\ Combinatorial Optimization}, 1(1):47--65, 1997.

\bibitem{K+05}
J.~K{\"o}nemann, S.~Leonardi, G.~Sch{\"a}fer, and S.~van Zwam.
\newblock From primal-dual to cost shares and back: A stronger {LP} relaxation
  for the {S}teiner forest problem.
\newblock In L.~Caires, G.~F. Italiano, L.~Monteiro, C.~Palamidessi, and
  M.~Yung, editors, {\em ICALP}, volume 3580 of {\em Lecture Notes in Computer
  Science}, pages 930--942. Springer, 2005.

\bibitem{KP07a}
J.~K\"{o}nemann and D.~Pritchard.
\newblock Uncrossing partitions.
\newblock Technical Report CORR 2007-11, University of Waterloo, 2007.

\bibitem{Kr56}
J.~Kruskal.
\newblock On the shortest spanning subtree of a graph and the traveling
  salesman problem.
\newblock {\em Proceedings, American Mathematical Society}, 7:48--50, 1956.

\bibitem{PVd00}
T.~Polzin and S.~{Vahdati Daneshmand}.
\newblock Primal-dual approaches to the {Steiner} problem.
\newblock {\em Electronic Colloquium on Computational Complexity (ECCC)},
  7(47), 2000.

\bibitem{PVd01}
T.~Polzin and S.~{Vahdati Daneshmand}.
\newblock A comparison of {S}teiner tree relaxations.
\newblock {\em Discrete Applied Mathematics}, 112(1-3):241--261, 2001.
\newblock Preliminary version appeared at COS 1998.

\bibitem{PVd01a}
T.~Polzin and S.~{Vahdati Daneshmand}.
\newblock Improved algorithms for the {S}teiner problem in networks.
\newblock {\em Discrete Applied Mathematics}, 112(1-3):263--300, 2001.

\bibitem{PVd03}
T.~Polzin and S.~{Vahdati Daneshmand}.
\newblock On {S}teiner trees and minimum spanning trees in hypergraphs.
\newblock {\em Oper. Res. Lett.}, 31(1):12--20, 2003.

\bibitem{PS00}
H.~J. Pr{\"o}mel and A.~Steger.
\newblock A new approximation algorithm for the {S}teiner tree problem with
  performance ratio 5/3.
\newblock {\em J.\ Algorithms}, 36(1):89--101, 2000.
\newblock Preliminary version appeared as ``{RNC}-approximation algorithms for
  the {Steiner} problem" at STACS 1997.

\bibitem{PS02}
H.~J. Pr{\"o}mel and A.~Steger.
\newblock {\em The {{S}teiner} Tree Problem --- {A} Tour through Graphs,
  Algorithms, and Complexity}.
\newblock Vieweg Verlag, Braunschweig-Wiesbaden, 2002.

\bibitem{RV99}
S.~Rajagopalan and V.~V. Vazirani.
\newblock On the bidirected cut relaxation for the metric {{S}teiner} tree
  problem.
\newblock In {\em Proceedings, ACM-SIAM Symposium on Discrete Algorithms},
  pages 742--751, 1999.

\bibitem{Ri03}
R.~Rizzi.
\newblock On {Rajagopalan} and {Vazirani}'s $3/2$-approximation bound for the
  {Iterated $1$-{{S}teiner}} heuristic.
\newblock {\em Information Processing Letters}, 86(6):335--338, 2003.

\bibitem{RZ05}
G.~Robins and A.~Zelikovsky.
\newblock Tighter bounds for graph {S}teiner tree approximation.
\newblock {\em SIAM J.\ Discrete Math.}, 19(1):122--134, 2005.
\newblock Preliminary version appeared as ``Improved {{S}teiner} tree
  approximation in graphs" at SODA 2000.

\bibitem{Sk06}
M.~Skutella.
\newblock Personal communication, 2006.

\bibitem{Va01}
V.~V. Vazirani.
\newblock {\em Approximation Algorithms}.
\newblock Springer, 2001.

\bibitem{War97}
D.~Warme.
\newblock A new exact algorithm for rectilinear {S}teiner trees.
\newblock In P.~Pardalos and D.-Z. Du, editors, {\em Network Design:
  Connectivity and Facilities Location: DIMACS Workshop April 28-30, 1997},
  pages 357--395. American Mathematical Society, 1997.
\newblock Preliminary version appeared at ISMP 1997.

\bibitem{War98}
D.~Warme.
\newblock {\em Spanning Trees in Hypergraphs with Applications to {S}teiner
  Trees}.
\newblock PhD thesis, University of Virginia, 1998.

\bibitem{Wo84}
R.~T. Wong.
\newblock A dual ascent approach for {{S}teiner} tree problems on a directed
  graph.
\newblock {\em Math.\ Programming}, 28:271--287, 1984.

\bibitem{Ze96}
A.~Zelikovsky.
\newblock Better approximation bounds for the network and {E}uclidean {S}teiner
  tree problems.
\newblock Technical report, University of Virginia, Charlottesville, VA, USA,
  1996.

\bibitem{Ze93}
A.~Z. Zelikovsky.
\newblock An $11/6$-approximation algorithm for the network {{S}teiner}
  problem.
\newblock {\em Algorithmica}, 9:463--470, 1993.

\end{thebibliography}
\end{document}